%
%
%
%
%
\RequirePackage{fix-cm}
\documentclass[smallextended]{svjour3}       
\smartqed  
\usepackage{graphicx}
\usepackage{amssymb}
\usepackage{xcolor}
\usepackage{graphicx}
\usepackage{amsmath}
\usepackage{latexsym}
\usepackage{multirow}
\usepackage{algorithm}
\usepackage{algorithmic}
\usepackage{enumerate}
\usepackage{caption}

\newlength{\commentindent}
\setlength{\commentindent}{.6\textwidth}

\DeclareGraphicsExtensions{.tif,.pdf,.jpeg,.png,.jpg,.bmp,.svg,.eps,.EMF}


%
%
%
%
%
\begin{document}

\title{Efficient Decoding Schemes for Noisy Non-Adaptive Group Testing when Noise Depends on Number of Items in Test\thanks{A preliminary version of this paper~\cite{bui2016efficiently} was presented at The International Symposium on Information Theory and Its Applications (ISITA) in Monterey, California, USA, October 30 -- November 2, 2016.}
}

\titlerunning{Dilution model for noisy non-adaptive group testing}        

\author{Thach V. BUI \and Tetsuya KOJIMA \and Minoru KURIBAYASHI \and Isao ECHIZEN}


\institute{Thach V. Bui \at
			SOKENDAI (The Graduate University for Advanced Studies), Kanagawa, Japan. \\
			\email{bvthach@nii.ac.jp}           
           \and
			Tetsuya KOJIMA \at
			National Institute of Technology, Tokyo College, Japan. \\
			\email{kojt@tokyo-ct.ac.jp}           
			\and
			Minoru KURIBAYASHI \at
			Graduate School of Natural Science and Technology, Okayama University, Japan. \\
			\email{kminoru@okayama-u.ac.jp}
			\and
			Isao ECHIZEN \at
			National Institute of Informatics, Tokyo, Japan. \\
			\email{iechizen@nii.ac.jp}			
}

\date{Received: date / Accepted: date}

\maketitle

\begin{abstract}
The goal of non-adaptive group testing is to identify at most $d$ defective items from $N$ items, in which a test of a subset of $N$ items is positive if it contains at least one defective item, and negative otherwise. However, in many cases, especially in biological screening, the outcome is unreliable due to biochemical interaction; i.e., \textit{noise.} Consequently, a positive result can change to a negative one (false negative) and vice versa (false positive). In this work, we first consider the dilution effect in which \textit{the degree of noise depends on the number of items in the test}. Two efficient schemes are presented for identifying the defective items in time linearly to the number of tests needed. Experimental results validate our theoretical analysis. Specifically, setting the error precision of 0.001 and $d\leq16$, our proposed algorithms always identify all defective items in less than 7 seconds for $N=2^{33}\approx 9$ billion.
\keywords{Group Testing \and Combinatorics \and Probability \and Algorithm}
\subclass{68R05 \and 68W20 \and 62D99}
\end{abstract}

\section{Introduction}
\label{sec:intro}
Group testing has received much attention from researchers worldwide because of its promising applications in various fields~\cite{chen2008survey}. To reduce the number of tests needed to identify infected inductees in WWII, Dorfman~\cite{dorfman1943detection} devised a scheme in which blood samples were mixed together, and the resulting pool was screened. If there were no errors in the blood screening, the outcome of the screening was considered positive if and only if (iff) there was at least one infected blood sample in the pool. In this case, it was called \textit{noiseless group testing}. Otherwise, it was called \textit{noisy group testing.} Nowadays, such group testing is used to find a very small number of \textit{defective items}, say $d$, in a huge number of items, say $N$, using as few \textit{tests}, say $t$, as possible at the lowest cost (decoding time) where $d$ is usually much smaller than $N$. ``Item'', ``defective item'', and ``test'' depend on the context.

\subsection{Background}
\label{sub:background}

There are two main problems in group testing: designing $t$ tests and finding $d$ defective items from $N$ items using the $t$ tests. These main problems are classified into two approaches when applying them in group testing. The first one is \textit{adaptive group testing} in which there are several testing stages, and the design of subsequent stages depends on the results of the previous ones. With this approach, the number of tests can reach the theoretically optimal bound, i.e., $t=\Omega(d\log_2{N})$. However, this can take much time if there are many stages. To overcome this problem, \textit{non-adaptive group testing} (NAGT) has been used. With this approach, all test stages are designed in advance and performed at the same time. This means that all stages can be performed simultaneously. This approach is useful with parallel architectures as it saves time for implementation in various applications, such as multiple access communications \cite{wolf1985born}, data streaming~\cite{cormode2005s}, and fingerprinting codes~\cite{laarhoven2015asymptotics}. To exactly identify all $d$ defective items, it needs $t = \Omega(d^2 \log_2{N}/\log_2{d})$ tests. In this paper, the focus is on NAGT unless otherwise stated.

Noiseless NAGT has been well studied while noisy NAGT, for which the outcome for a test is not reliable, has been well studied in~\cite{atia2012boolean,cheraghchi2013noise,ngo2011efficiently}. If the outcome for a test flips from a positive outcome to a negative one, it is called a \textit{false negative}. If the outcome flips from negative to positive, it is called a \textit{false positive}. There are two assumptions on the number of errors in the test outcome. The first is that there are upper bounds on the number of false positives and the number of false negatives~\cite{cheraghchi2013noise,ngo2011efficiently}; i.e. the maximum number of errors is \textbf{known}. The second and more widely used assumption is that the number of errors is \textbf{unknown}~\cite{atia2012boolean,lee2016saffron}. In the second assumption, there are two types of noise. One is independent noise, where noise is given randomly according to the probability distribution independent of the number of items in each test. Another one is dependent noise, where noise is given according to the number of items in each test. In biological screening, the number of items in a test affects the accuracy of the test outcome~\cite{bruno1995efficient,erlich2015biological,lewis2004measurement,kainkaryam2010poolmc}. Despite the presence of this effect, to the best of our knowledge, there has been no scientific work on the dependent noise case.

\subsection{Our contribution}
\label{sub:contri}
We propose a dilution model for noisy NAGT in which noise depends on the number of items in the test. Then an efficient scheme is proposed for identifying one defective item $(d = 1)$ with high probability by using Chernoff bound. For $d \geq 2$, we utilize the scheme for identifying one defective item to deploy a divide and conquer strategy. As a result, all defective times can be identified with high probability. Moreover, the decoding complexity is linearly scaled to the number of tests for both cases $d = 1$ and $d \geq 2$.

A set of tests needed can be viewed as a binary measurement matrix in which each row represents a test and each column represents for a item. An entry at the row $i$ and the column $j$ is 1 iff the item $j$ belong to the test $i$. A matrix can be constructed nonrandomly if each entry of the matrix can be computed in polynomial time of the number of rows without randomness. That means we save space for storing the measurement matrix and can generate any entry of the matrix at will. In this work, our measurement matrix can be constructed nonrandomly.

Experimental results validate our theoretical analysis. Let the error probability of decoding algorithm be 0.001, the probabilities of false positive and false negative for a test be up to 0.2 and 0.1, respectively. When $d = 1$, the number of tests needed is up to 16,000 even when $N = 2^{33} \approx 9$ billion. Since the number of tests is up to 16,000, the decoding time to identify a defective item is up to 6 microseconds in our experiment environment. When $d \leq 16$, the number of tests is up to 2.5 billion and the decoding time is at most 7 seconds when $N$ is up to $N = 2^{33} \approx 9$ billion.

\subsection{Paper organization}
\label{sub:organiza}
Section~\ref{sec:related} presents related works, including noiseless and noisy group testing, and decoding algorithms. Section \ref{sec:setup} presents the existing and our proposed models. Our proposed schemes for identifying one and more than one defective items are presented in Section~\ref{sec:proposed}. The main results are presented in Section \ref{sec:main} and the evaluation of the proposed schemes is presented in Section~\ref{sec:eval}. The key points are summarized in Section \ref{sec:cls}.

\section{Related works}
\label{sec:related}

The literature on group testing is surveyed here to give an overview on it. Further reading can be found elsewhere~\cite{chen2008survey}. The notation $a := b$ means $a$ is defined in another name for $b$, and $a =: b$ means $b$ is defined in another name for $a$. We can model the group testing problem as follows. Given a binary sparse vector $\mathbf{x} = (x_1, x_2, \ldots, x_N)^T \in \{0, 1 \}^{N}$ representing $N$ items, where $x_j = 1$ iff item $j$ is defective and $|\mathbf{x}| := \sum_{j=1}^N x_j \leq d$, our aim is to design $t$ tests such that $\mathbf{x}$ can be reconstructed at low cost. Each test contains a subset of $N$ items. Hence, a test can be considered to be a binary vector $\{ 0, 1\}^N$ that is associated with the indices of the items belonging to that test. More generally, a set of $t$ tests can be viewed as measurement matrix $\mathcal{T} = (t_{ij}) \in \{0, 1 \}^{t \times N}$, in which each row represents a test, and $t_{ij} = 1$ iff the item $j$ belongs to the test $i$. The outcome for a test is positive (`1') or negative (`0'). Since there are $t$ tests, their outcomes can be viewed as binary vector $\mathbf{y} = (y_1, \ldots, y_t)^T \in \{0, 1 \}^{t}$. The procedure to get the outcome vector $\mathbf{y}$ is called \textit{encoding procedure}. The procedure used to identify defective items from the outcome $\mathbf{y}$, i.e., recovering $\mathbf{x}$, is called \textit{decoding procedure.}

\subsection{Disjunct matrix}
\label{sub:disjunctMatrix}
The union of $l \geq 1$ vectors $\mathbf{y}_1, \ldots, \mathbf{y}_l$ where $\mathbf{y}_i = (y_{1i}, y_{2i}, \ldots, y_{ti})^T$ for $i=1, \ldots, l$ and some integer $t \geq 1$, is defined as vector $\mathbf{y} = \vee_{i = 1}^l \mathbf{y}_i := (\vee_{i = 1}^l y_{1i}, \ldots, \vee_{i = 1}^l y_{ti})^T$, where $\vee$ is the OR operator. A vector $\mathbf{u} = (u_1, \ldots, u_t)^T$ is said not to contain vector $\mathbf{v} = (v_1, \ldots, v_t)^T$ iff there exists the index $i$ such that $u_i = 0$ and $v_i = 1$. To \textit{exactly} identify at most $d$ defective items, the union of at most $d$ columns of $\mathcal{T}$ must not contain other columns~\cite{kautz1964nonrandom}, where each column of $\mathcal{T}$ is treated as a vector here. In the other words, for every $d+1$ columns of $\mathcal{T}$, denoted by $j_0, j_1, \ldots, j_d$, with one designated column, e.g., $j_0$, there exists a row, e.g., $i_0$, such that $t_{i_0 j_0} = 1$ and $t_{i_0 j_a} = 0$, where $a \in \{1, \ldots, d\}$. In short, for every $d+1$ columns of $\mathcal{T}$, there exists an $(d+1) \times (d+1)$ identity matrix constructed by placing those columns in an appropriate order. We call such $\mathcal{T}$ a $t \times N$ $d$-disjunct matrix\footnote{We insist that a $d$-disjunct matrix always identifies exactly $d$ defective items, but it is not necessary for a matrix to be $d$-disjunctive in order to identify $d$ defective items.}. Note that a $d$-disjunct matrix is also a $d^\prime$-disjunct matrix for any integer $d^\prime \leq d$.

The weight of a row (column) is the number of 1s in the row (column). And a matrix is nonrandomly constructed if its entries can be computed in polynomial time of the size of its row withour randomness. Then, from the construction of $d$-disjunct matrices in~\cite{kautz1964nonrandom}, we have the following lemma (the proof is in Appendix \ref{app:Thr1}).
\begin{lemma}
\label{lem:ConstDisjunct}
Given integers $1 \leq d < N$, there exists a nonrandomly constructed $h \times N$ $d$-disjunct matrix with the constant row and column weight, where the weight of every row is less than $N/2$ and $h = O(d^2 \log^2{N})$.
\end{lemma}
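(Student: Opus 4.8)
The plan is to realize the matrix via the Kautz--Singleton concatenation~\cite{kautz1964nonrandom} of a Reed--Solomon (RS) code with the unit-vector (identity) code, and then tune the underlying field size so that the three requirements---$d$-disjunctness, the $h=O(d^2\log^2 N)$ bound, and the prescribed constant row/column weights---hold simultaneously.

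First I would fix a prime power $q$ and take an RS code over $\mathbb{F}_q$ of length $n=q$ and dimension $k$, so that it has $q^k$ codewords and minimum distance $q-k+1$; consequently any two distinct codewords agree in at most $k-1$ coordinates. Concatenating each symbol $a\in\mathbb{F}_q$ with the length-$q$ indicator vector $\mathbf{e}_a$ turns every codeword into a binary column of length $h=q\cdot q=q^2$ having exactly $n=q$ ones, which gives constant column weight. Because each coordinate of an RS codeword is a fixed evaluation map that takes each field value exactly $q^{k-1}$ times as the message ranges over $\mathbb{F}_q^k$, every row of the concatenated matrix contains exactly $q^{k-1}$ ones, giving constant row weight.

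Next I would verify disjunctness and pin down the parameters. In the concatenated matrix two columns share a $1$ only in coordinate blocks where the underlying RS codewords agree, so any two columns overlap in at most $k-1$ ones; hence the union of any $d$ columns can cover at most $d(k-1)$ of the $q$ ones of any further column. Imposing $d(k-1)<q$ therefore yields the $d$-disjunct property. To obtain at least $N$ columns I would set $k=\lceil \log_q N\rceil$, so the disjunctness inequality becomes $q\gtrsim d\log_q N$; choosing $q$ to be a prime power of order $\Theta(d\log N)$ (available in the right range by Bertrand's postulate, and a prime field keeps arithmetic deterministic) satisfies this while yielding $h=q^2=O(d^2\log^2 N)$. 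Nonrandomness is then immediate: every entry is determined by a polynomial evaluation over $\mathbb{F}_q$ followed by a fixed unary expansion, all computable in time polynomial in the logarithm of the matrix dimensions without any coin flips.

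The step I expect to be the most delicate is the bookkeeping that reconciles the construction's $q^k$ columns with the target value $N$ while simultaneously forcing the row weight strictly below $N/2$: since $k=\lceil\log_q N\rceil$ only guarantees $q^{k-1}<N$, I would need a small parameter adjustment (for instance enlarging $k$ by one, or restricting to a balanced subset of exactly $N$ columns chosen so that each (coordinate, symbol) block is hit an equal number of times) to push the common row weight below $N/2$ without disturbing disjunctness or the asymptotic row count. Verifying that such a balanced column selection exists and that it preserves the constant row weight exactly is the part I would treat most carefully.
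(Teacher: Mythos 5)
Your construction is essentially the paper's own: the Kautz--Singleton concatenation of a Reed--Solomon code with the identity code, with disjunctness obtained from the MDS distance bound (any two codewords agree in at most $k-1$ coordinates, so $d(k-1)<n$ suffices), constant row weight obtained from the fact that each evaluation coordinate takes every value of $\mathbb{F}_q$ equally often (the paper invokes this as a lemma quoted from Silverman), and nonrandomness from explicit polynomial evaluation. The only genuine divergence is the parameterization, and it is precisely where your self-flagged ``delicate step'' lives. You fix the RS length at $n=q$ and set $k=\lceil\log_q N\rceil$, so you must discard $q^k-N$ columns; after an arbitrary deletion the row weights are no longer constant, and exact constancy is even arithmetically impossible unless $q\mid N$ (with column weight $q$ the total number of ones is $Nq$ spread over $q^2$ rows, forcing common row weight $N/q$). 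The paper never faces this issue: it allows RS length $n<q$, takes $N=q^r$ exactly (consistent with its standing assumption that $N$ is a power of $2$), uses \emph{all} codewords as columns with $d=\lfloor (n-1)/(r-1)\rfloor$, and the row weight comes out to exactly $q^{r-1}=N/q<N/2$ with $h=qn<d^2\log_2^2 N$. Your balanced-subset repair does go through under the same kind of divisibility assumption: choose $q$ a power of $2$ of order $\Theta(d\log N)$ so that $q\mid N$, and keep exactly the $N$ codewords whose message vectors form $\mathbb{F}_q\times T$ for an arbitrary $T\subseteq\mathbb{F}_q^{k-1}$ with $|T|=N/q$ (the constant coefficient left free); then for every coordinate $i$ and symbol $\alpha$ the row indexed by $(i,\alpha)$ has weight exactly $|T|=N/q<N/2$, disjunctness trivially survives column deletion, and the column weight $q$ is untouched. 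Note, however, that your alternative fix of ``enlarging $k$ by one'' does nothing by itself; the clean completion is the coset/subcode selection just described, which in effect retreats to the paper's parameterization.
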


\subsection{Noiseless and noisy non-adaptive group testing}
Let $\otimes$ be the Boolean operator for vector multiplication in which multiplication is replaced with the AND operator and addition is replaced with the OR operator. Ideally, if no error occurs in the tests, the outcome of $t$ tests is
\begin{equation}
\mathbf{y} = \mathcal{T} \otimes \mathbf{x} := \begin{bmatrix}
\mathcal{T}_{1, *} \otimes \mathbf{x} \\
\vdots \\
\mathcal{T}_{t, *} \otimes \mathbf{x}
\end{bmatrix} := \bigvee_{j=1}^{N} x_j \mathcal{T}_j = \bigvee_{\substack{j=1 \\ x_j = 1}}^{N} \mathcal{T}_j,
\end{equation}
where $\mathcal{T}_{i, *}$ and $\mathcal{T}_j$ are the $i$th row and the $j$th column of $\mathcal{T}$, respectively. $\mathcal{T}_{i, *} \otimes \mathbf{x}$ equals 1 iff the dot product of $\mathcal{T}_{i, *}$ and $\mathbf{x}$, i.e., $\mathcal{T}_{i, *} \cdot \mathbf{x} = \sum_{j = 1}^N t_{ij} x_j$, is larger than 0. We call this \textit{noiseless} group testing. For exact identification of $d$ defective items, the number of tests is at least $O(d\log_2{N})$ for adaptive group testing and $O(d^2 \log_2{N}/\log_2{d})$ for NAGT~\cite{du2000combinatorial}. If errors do occur in some tests, we call that case \textit{noisy} group testing. There are two types of errors: false positive and false negative. Noisy NAGT has gained attraction and become popular due to its connection to compressed sensing~\cite{atia2012boolean,ngo2011efficiently}. In the previous works cited here, the authors treated noise as a random variable independent of measurement matrix $\mathcal{T}$ or $t$ tests. This is not suitable for certain circumstances, so that it is required to develop a new model in which noise \textit{depends on the number of items in each test}.

An example of noiseless and noisy NAGT is illustrated in Fig. \ref{fig:design}. In this example, the defective set is $\mathbb{G} = \{1, N\}$; i.e., the defective items are 1 and $N$. Ideally, the outcome vector is the union of the columns 1 and $N$ of $\mathcal{T}$ (the rows 1 and $N$ of $\mathcal{T}^T$); that is, $\widehat{\mathbf{y}}^T = \mathcal{T}_1^T \vee \mathcal{T}_N^T$. However, because there is noise, the outcome of the test 2 is flipped from \textit{positive} to \textit{negative} (`1' to `0'), and that of the test 4 is flipped from \textit{negative} to \textit{positive} (`0' to `1'). The final outcome observed is $\mathbf{y}$.  The goal is to find the defective set $\mathbb{G}$ from $\mathbf{y}$.

\begin{figure}[t]
\center
\includegraphics[scale=0.4]{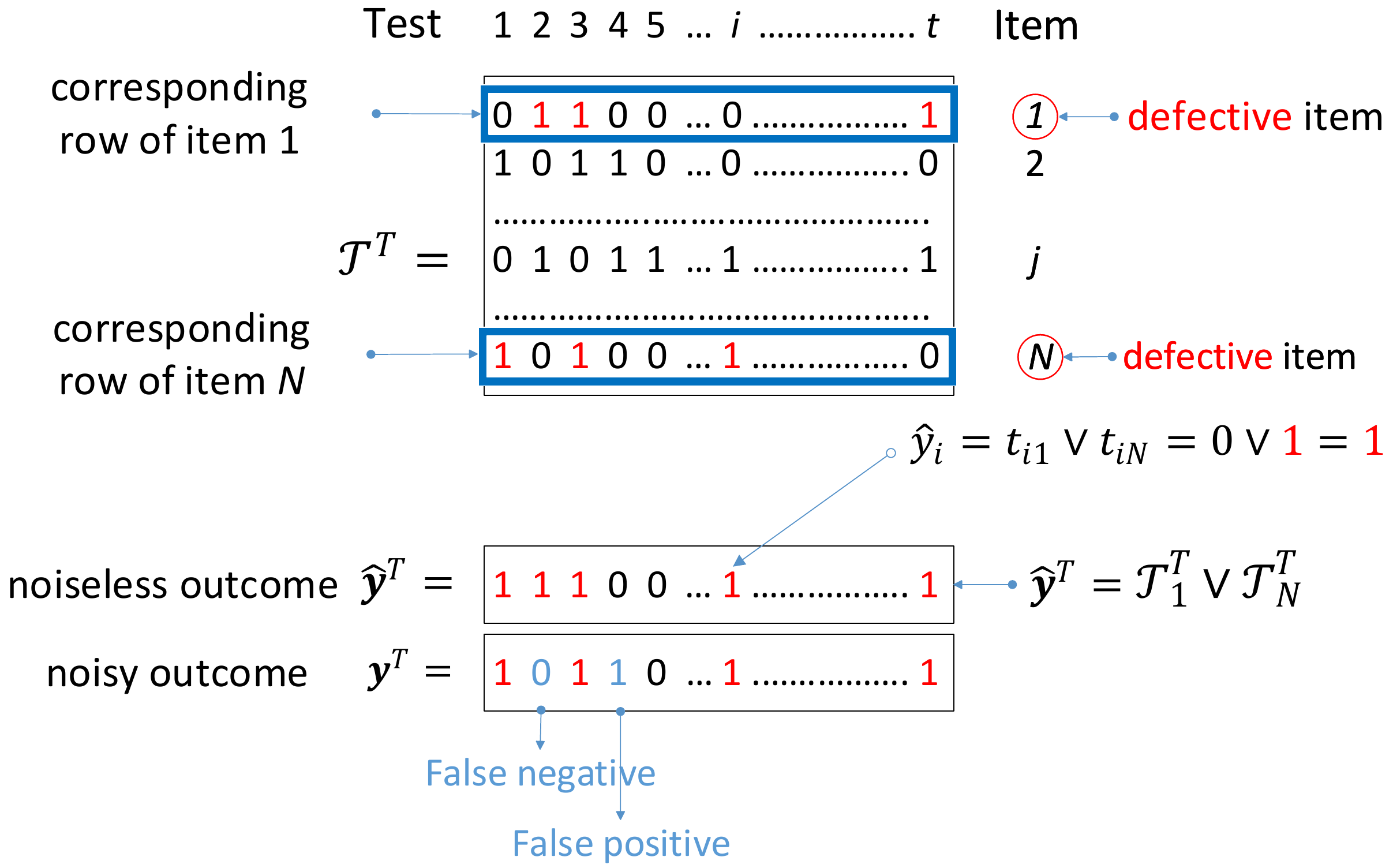}

\caption{A $t \times N$ binary matrix $\mathcal{T} = (t_{ij})$ represents for $t$ tests and $N$ items, where $i = 1,\ldots, t$ and $j = 1, \ldots, N$. Columns and rows represent items and tests, respectively. An entry at the column $j$ and the row $i$ is 1 iff item $j$ belongs to test $i$. In this example, the defective set is $\mathbb{G} = \{1, N \}$; i.e., the defective items are 1 and $N$. Therefore, the outcome of $t$ tests without noise is the union of the columns 1 and $N$, where 0/1 denotes negative and positive outcomes, namely $ \widehat{y}_i = t_{i1} \vee t_{iN}$ for $i = 1, \ldots, t$, where $\widehat{\mathbf{y}} = (\widehat{y}_1, \widehat{y}_2, \ldots, \widehat{y}_t)^T$. In the noisy case, the outcome of the test 2 is flipped from positive (`1') to negative (`0') and that of the test 4 is flipped from negative (`0') to positive (`1'). As a result, the outcome observed is $\mathbf{y}$. Our goal is to find defective items from $\mathbf{y}$.}
\label{fig:design}
\end{figure}

\subsection{Decoding complexity}
The normal decoding complexity in time is $O(tN)$ for noiseless NAGT. If the test outcome is negative, all items contained in the test are non-defective and thus eliminated. The items remaining after eliminating all non-defective items in negative outcomes are defective. Indyk \textit{et al.}~\cite{indyk2010efficiently} and Cheraghchi~\cite{cheraghchi2013noise} made a breakthrough on the decoding complexity problem by presenting a sub-linear algorithm that solves this problem in $\mathrm{poly}(t)$ time. Previous studies have shown that an information-theoretic lower-bound for both the number of tests and decoding complexity~\cite{cai2013grotesque,lee2016saffron} can be almost achieved with $O(d\log_2{d} \cdot \log_2{N})$ tests and implemented in $O(d\log_2{d} \cdot \log_2{N})$ time by using a divide and conquer strategy with high probability. The number of tests and the decoding complexity are much smaller than those with the methods of Indyk \textit{et al.}~\cite{indyk2010efficiently} and Cheraghchi~\cite{cheraghchi2013noise}.

To the best of our knowledge, there has been no algorithm for identifying defective items when noise depends on the number of items in the test.

\subsection{Deterministic and randomized algorithms}
There are four approaches when identifying defective items: to identify all defective items, to identify all defective items with some false positives, to identify a fraction of the defective items (with some false negatives and no false positive), and to identify a fraction of the defective items with some false positives (with/without some false negatives). Algorithms identifying all defective items or a fraction of positive items with probability of 1 are called \textit{deterministic}. The other algorithms with probability less than 1 are called \textit{randomized}.

\section{Problem setup}
\label{sec:setup}

\subsection{Notations}
\label{sub:notation}
We focus on noisy NAGT. For consistency, we use capital calligraphic letters for matrices, non-capital letters for scalars, and bold letters for vectors. All entries of matrices and vectors are binary. Here are some of the notations used:
\begin{itemize}
\item $N, d$: number of items and maximum number of defective items\footnote{For simplicity, we assume that $N$ is to the power of 2.}.
\item $\otimes, \tilde{\otimes}$: operation related to noiseless and noisy NAGT, where $\tilde{\otimes}$ is to be defined later.
\item $\mathcal{T} = (t_{ij})$: $t \times N$ measurement matrix to identify at most $d$ defective items, where integer $t \geq 1$ is the number of tests.
\item $\mathcal{M} = (m_{ij})$: $k \times N$ $1$-disjunct matrix to identify at most one defective item, where integer $k \geq 1$ is the number of tests.
\item $\mathcal{A} = (a_{ij})$: $K \times N$ matrix to identify at most one defective item in noisy NAGT, where integer $K \geq 1$ is the number of tests.
\item $\mathcal{G} = (g_{ij})$: $h \times N$ matrix, where integer $h \geq 1$.
\item $\mathbf{x}, \mathbf{y}, \overline{\mathbf{y}}$:  binary representation of $N$ items, binary representation of the test outcomes, and binary vector of noise.
\item $\mathrm{wt}(.)$: Hamming weight of the input vector, i.e., the number of ones in the vector.
\item $\mathbb{G}$: index set of defective items, e.g., $\mathbb{G} = \{1, 3 \}$ means the items 1 and 3 are defective.
\item $\theta_0, \theta_1$: probability of a false positive and of a false negative.
\item $\mathcal{T}_j, \mathcal{M}_j, \mathcal{G}_{i,*}$: column $j$ of matrix $\mathcal{T}$, column $j$ of matrix $\mathcal{M}$, and row $i$ of matrix $\mathcal{G}$.
\item $\mathrm{diag}(\mathcal{G}_{i, *}) = \mathrm{diag}(g_{i1}, g_{i2}, \ldots, g_{iN})$: diagonal matrix constructed by input vector $\mathcal{G}_{i, *} = (g_{i1}, g_{i2}, \ldots, g_{iN})$.
\item $B(p)$: Bernoulli distribution. A random variable $a$ has the Bernoulli distribution of probability $p$, denoted $a \sim B(p)$ if it takes the value 1 with probability $p$ ($\Pr(a = 1) = p$) and the value 0 with probability $1-p$ ($\Pr(a = 0) = 1 - p$).
\item $\mathrm{e}, \log_2, \ln, \mathrm{exp}(\cdot)$: the base of natural logarithm, the logarithm of base 2, the natural logarithm, and the exponential function.
\item $\lceil a \rceil, \lfloor a \rfloor$: the ceiling and the floor functions of $a$.
\end{itemize}
We note that a $d$-disjunct matrix is a measurement matrix, but a measurement matrix may not be a $d$-disjunct matrix.

\subsection{Existing models}
\label{sub:model}

When working with noisy NAGT, there are two assumptions about the number of errors in test outcome $\mathbf{y}$. The first is that there are upper bounds on the number of false positives and number of false negatives~\cite{cheraghchi2013noise,ngo2011efficiently}; i.e., the number of errors is \textbf{known}. The second and more widely used assumption~\cite{atia2012boolean,lee2016saffron} is that the number of errors is \textbf{unknown}. We only consider the latter assumption here.

When the number of errors is unknown, researchers usually consider two types of noise. One type is independent noise, where noise occurs randomly in accordance with a probability distribution independent of the number of items in each test. The other type is dependent noise, where noise occurs in accordance with the number of items in each test. We use $\mathbf{y} = (y_1, \ldots, y_t)^T$ and $\bar{\mathbf{y}} = (\bar{y}_1, \ldots, \bar{y}_t)^T$ in the sections~\ref{subsub:additive1},~\ref{subsub:additive2},~\ref{subsub:dillu}.

\begin{figure}[ht] 
  \label{errorModels} 
  \begin{minipage}[b]{0.5\linewidth}
    \centering
    \includegraphics[width=0.85\linewidth]{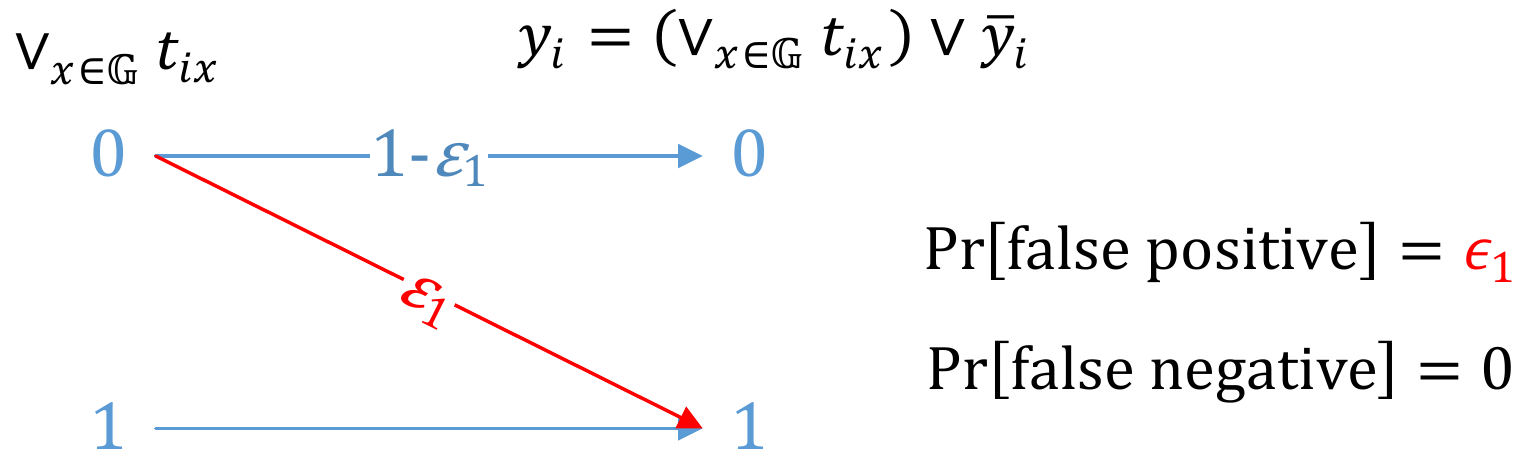}

	\caption{Additive noise type 1 model.}
	\label{additive1}	
	\vspace{4ex}
  \end{minipage}
  \hfill
  \begin{minipage}[b]{0.5\linewidth}
    \centering
    \includegraphics[width=0.85\linewidth]{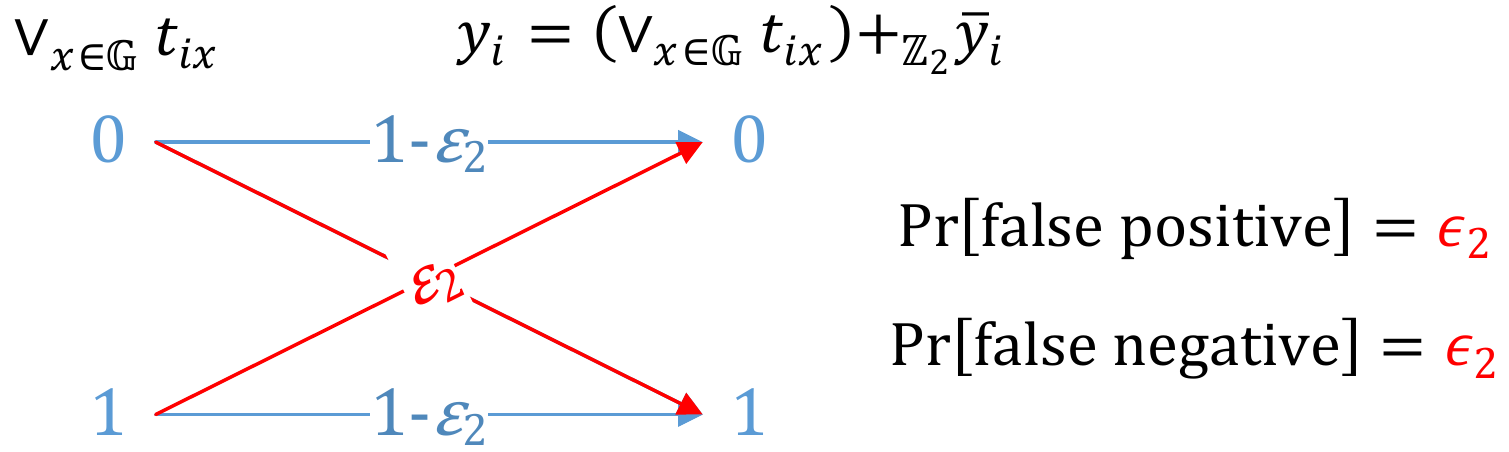}

	\caption{Additive noise type 2 model.}
	\label{additive2}
	\vspace{4ex}
  \end{minipage} 
  
  \begin{minipage}[b]{0.5\linewidth}
    \centering
    \includegraphics[width=0.85\linewidth]{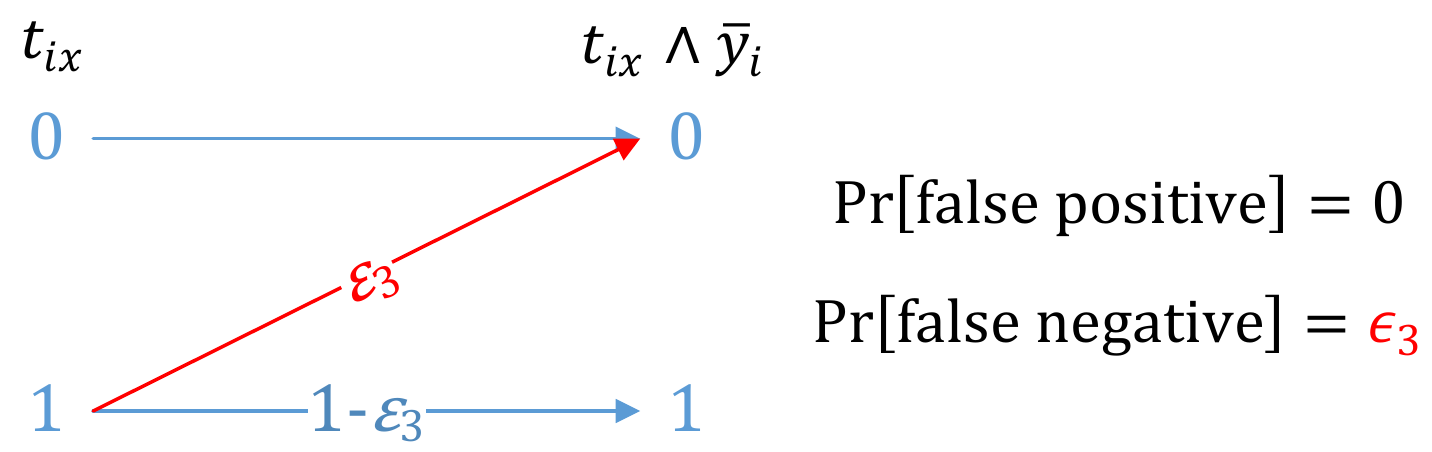}

	\caption{Dilution type 1 model.}
	\label{dilution1}
	\vspace{4ex}
  \end{minipage}
  \hfill
  \begin{minipage}[b]{0.5\linewidth}
    \centering
    \includegraphics[width=0.85\linewidth]{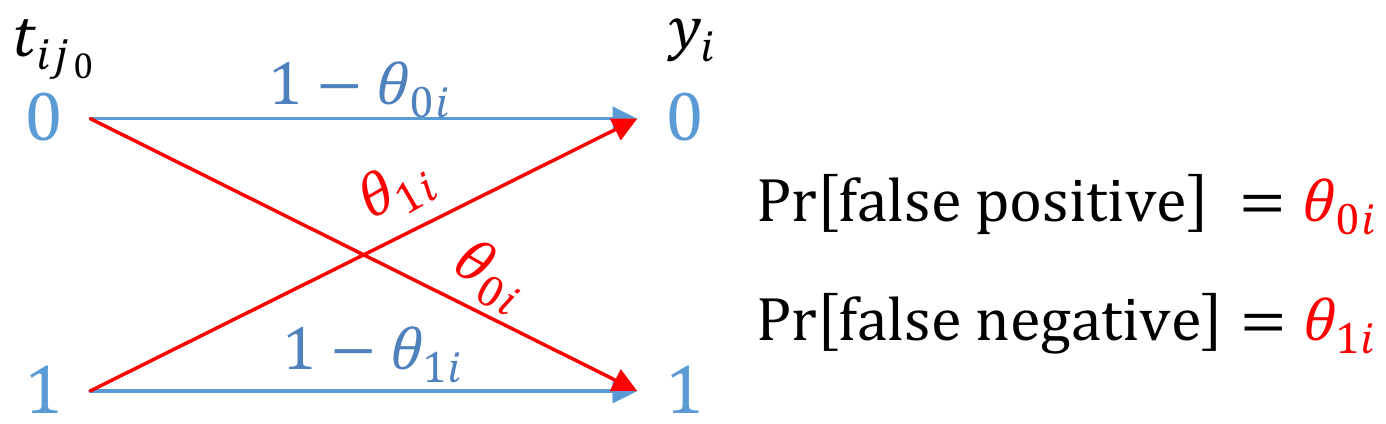}

	\caption{Dilution type 2 model.}
	\label{dilution2}
	\vspace{4ex}
  \end{minipage}
\end{figure}

\subsubsection{Additive noise type 1 model}
\label{subsub:additive1}
Atia and Saligrama~\cite{atia2012boolean} proposed a model in which noise occurs in accordance with a Bernoulli distribution. The observed outcome is $\mathbf{y} = \left(\vee_{x \in \mathbb{G}}\mathcal{T}_x \right) \vee \bar{\mathbf{y}}$, where $\bar{y}_i \sim B(\epsilon_1)$ and $y_i = (\vee_{x \in \mathbb{G}} t_{ix}) \vee \bar{y}_i$ (Fig.~\ref{additive1}) for $i=1, \ldots, t$, and $\epsilon_1 > 0$. In this model, a $t \times N$ measurement matrix $\mathcal{T}$ can be designed with $t = O\left( \frac{d^2 \log{N}}{1 - \epsilon_1} \right)$. Moreover, there is no decoding algorithm associated with this matrix.

\subsubsection{Additive noise type 2 model}
\label{subsub:additive2}
Lee \textit{et al.}~\cite{lee2016saffron} also proposed a model in which noise occurs in accordance with a Bernoulli distribution. However, the error model is different: the observed outcome is $\mathbf{y} = \left( \vee_{x \in \mathbb{G}}\mathcal{T}_x \right) +_{\mathbb{Z}_2} \bar{\mathbf{y}}$, where $\bar{y}_i \sim B(\epsilon)$, $+_{\mathbb{Z}_2}$ is the operation over binary field $\mathbb{Z}_2$ ($0 + 0 = 0, 0 + 1 = 1, 1 + 0 = 1, 1 + 1 = 0$), and $y_i = (\vee_{x \in \mathbb{G}} t_{ix}) +_{\mathbb{Z}_2} \bar{y}_i$ (Fig.~\ref{additive2}) for $i=1, \ldots, t$. In this model, at most $d$ defective items can be identified with probability at least $1 - \delta$ by using a $t \times N$ measurement matrix $\mathcal{T}$ with $t = O \left( \frac{c(\epsilon_2) d \log_2{N}}{1 - H(\epsilon_2) - \delta} \right)$ in time $O(t)$, where $c(\epsilon_2)$ is a value depending on $\epsilon_2 > 0$, $H(\epsilon_2) = \epsilon_2 \log_2{\frac{1}{\epsilon_2}} + (1 - \epsilon_2) \log_2{\frac{1}{1 - \epsilon_2}}$, and $\delta > 0$. In addition, each entry of $\mathcal{T}$ is generated randomly.

\subsubsection{Dilution type 1 model}
\label{subsub:dillu}
The third model is the \textit{dilution model}~\cite{atia2012boolean}, in which $\mathbf{y} = \vee_{x \in \mathbb{G}} (\mathcal{T}_x \wedge \bar{\mathbf{y}})$, where $\bar{y}_i \sim B(1-\epsilon_3)$ and $y_i = \vee_{x \in \mathbb{G}} (t_{ix} \wedge \bar{y}_i)$ for $i=1, \ldots, t$. Each entry of $\mathcal{T}_x \wedge \bar{\mathbf{y}}$ has the distribution shown as Fig.~\ref{dilution1}, i.e., the probability of false positive is 0 and of false negative is $\epsilon_3$. In this model, a $t \times N$ measurement matrix $\mathcal{T}$ can be designed with $t = O \left( \frac{d^2 \log_2{N}}{(1-\epsilon_3)^2} \right)$. Moreover, there is no decoding algorithm associated with this matrix.

\subsection{Proposed model: dilution type 2 model}
\label{sub:new}
In some circumstances, especially in biological screening, the three noise models above may not be suitable. The noise in biological screening depends on the number of items in the test, and there have been no reports on this noise model. Assume that there is only \textbf{one defective item} $j_0$ in $N$ items. Then the binary representation vector of $N$ items $\mathbf{x} := (x_1, \ldots, x_N)^T$ has the only non-zero entry $x_{j_0} = 1$. We define the outcome of test $i$ in noisy NAGT under dilution type 2 model is as follows:
\begin{equation}
\begin{cases}
\Pr(\mathrm{outcome} = 1 | t_{ij_0} = 0) := \Pr( \mathcal{T}_{i, *} \tilde{\otimes} \mathbf{x} = 1 | t_{ij_0} = 0) = \theta_{0i} \leq \frac{1}{2} \\
\Pr(\mathrm{outcome} = 0 | t_{ij_0} = 1) := \Pr( \mathcal{T}_{i, *} \tilde{\otimes} \mathbf{x} = 0 | t_{ij_0} = 1) = \theta_{1i} \leq \frac{1}{2}
\end{cases}
\label{eqn:prob1Defec}
\end{equation}
where $\tilde{\otimes}$ is the operation for noisy NAGT\footnote{It is reasonable to assume the probabilities of false positive and false negative are at most $\frac{1}{2}$. Otherwise, our test designs must not be used because of its high unreliability.}. When there is at most one defective item in \textit{a test}, the first and second equations indicate the probability of false positive and false negative, respectively. Formally, we define the operation $\tilde{\otimes}$ in case $|\mathbf{x}| \leq 1$ as follows:
\begin{eqnarray}
\Pr( \mathcal{T}_{i, *} \tilde{\otimes} \mathbf{x} = 1) = \begin{cases}
\theta_{0i} \leq \frac{1}{2} \qquad \hfill \mbox{ if } t_{ij_0} = 0 \\
1 - \theta_{0i} \geq \frac{1}{2} \qquad \hfill \mbox{ if } t_{ij_0} = 1
\end{cases} \\
\Pr( \mathcal{T}_{i, *} \tilde{\otimes} \mathbf{x} = 0) = \begin{cases}
\theta_{1i} \leq \frac{1}{2} \qquad \hfill \mbox{ if } t_{ij_0} = 1 \\
1 - \theta_{1i} \geq \frac{1}{2} \qquad \hfill \mbox{ if } t_{ij_0} = 0
\end{cases}
\end{eqnarray}

We note that there is no \textit{explicit function} for the dependency on the number of items in the test and on $\theta_{0i}$ and $\theta_{1i}$. This model has an \textbf{asymmetric} noise channel in which the probability of false positive $\theta_{0i}$ and the probability of false negative $\theta_{1i}$ are free from having any explicit relationship such as $\theta_{0i} = \theta_{1i}$, as illustrated in Fig.~\ref{dilution2}, where $i=1, \ldots, t$. Eq.~\eqref{eqn:prob1Defec} also shows that the number of defective items in a test is at most one in this model.

Assume that $\mathrm{wt}(\mathcal{T}_{i, *}) = m_i$ for $i = 1, \ldots, t$. From Eq. \eqref{eqn:prob1Defec}, the probability that the outcome is correct is:
\begin{eqnarray}
p_{0i} &=& \Pr(\mbox{outcome is correct}) \notag \\
&=& \Pr(\mbox{outcome} = 0 | t_{ij_0} = 0) \Pr(t_{ij_0} = 0) \notag \\
&+& \Pr(\mbox{outcome} = 1 | t_{ij_0} = 1) \Pr(t_{ij_0} = 1) \\
&=& (1 - \theta_{0i}) \times \left( 1 - \frac{m_i}{N} \right) + (1 - \theta_{1i}) \times \frac{m_i}{N} \label{eqn:probWeight}
\end{eqnarray}
\noindent
Eq. \eqref{eqn:probWeight} is derived because $\Pr(t_{ij_0} = 1) = \frac{\mathrm{wt}(\mathcal{T}_{i, *})}{N} = \frac{m_i}{N}$.

We are now going to evaluate the expectation of the number of correct outcomes. Let denote the event that the outcome of test $i$ is correct as $X_i$. $X_i$ takes 1 if the outcome of test $i$ is correct, and 0 otherwise. Then,
\begin{eqnarray}
\mathbb{E}(X_i) &=& 1 \times \Pr(X_i = 1) + 0 \times\Pr(X_i = 0) \\
&=& \Pr(\mbox{outcome is correct}) = p_{0i} \\
&=& (1 - \theta_{0i}) \times \left( 1 - \frac{m_i}{N} \right) + (1 - \theta_{1i}) \times \frac{m_i}{N} \label{eqn:expectWeight}
\end{eqnarray}

Assume that there are $c$ independent tests $1, 2, \ldots, c$. Let $X := \sum_{i = 1}^c X_i$ denote their sum and let $\mu := \mathbb{E}(X)$ denote the expectation of $X$. The variable $X$ indicates \textbf{how many outcomes are correct} in $c$ tests. Then, $\mu = \mathbb{E}(X) = \mathbb{E} \left( \sum_{i = 1}^c X_i \right) = \sum_{i = 1}^c  \mathbb{E} (X_i) = \sum_{i = 1}^c p_{0i}$. Assume that $ \left( \frac{1}{2} + \xi \right) c \leq (1 - \lambda) \mu$ for some $\xi > 0$ and any $\lambda > 0$. Using Chernoff's bound, we have:
\begin{eqnarray}
\Pr \left( X \leq \left( \frac{1}{2} + \xi \right)c \right) \leq \Pr \left(X \leq \left(1 - \lambda \right) \mu \right) \leq \mathrm{exp} \left(- \frac{\lambda^2 \mu}{2} \right). \label{ChernoffX}
\end{eqnarray}
Eq.~\eqref{ChernoffX} implies that the probability where the number of correct outcome is not dominant $\left( \Pr \left( X \leq \frac{1}{2}c \right) \right)$ is at most $\mathrm{exp} \left(- \frac{\lambda^2 \mu}{2} \right)$.

Assume that we have a test under the dilution type 2 model. We would like to know the outcome of the test in noiseless setting, i.e., $\mathcal{T}_{i, *} \otimes \mathbf{x}$. One can design tests as follows. Repeat $c$ independent trials for the test. Then, we count how many outcomes are 1 and how many outcomes are 0. Then if the number of positive (negative) outcome is larger than $\frac{1}{2} c$, we claim that the outcome of the test in noiseless setting is positive (negative). Because of Eq.~\eqref{ChernoffX}, our claim is wrong with probability at most $\mathrm{exp} \left(- \frac{\lambda^2 \mu}{2} \right)$. The parameters $\xi$ and $\lambda$ will be specified in the next section.

\subsubsection{The case of equal number of items in each test}
\label{subsub:Dilu2Equal}
We instantiate the model in Fig.~\ref{dilution2} in the case when there is at most one defective item ($d = 1$). Assume that the number of items in \textit{every test} is equal to $N/2$, i.e., $m = m_1 = \ldots = m_t = \frac{N}{2}$. Therefore, the probability of a false positive and of a false negative are $ 0 \leq \theta_{0} = \theta_{01} = \ldots = \theta_{0t} \leq \frac{1}{2}$ and $ 0 \leq \theta_{1} = \theta_{11} = \ldots = \theta_{1t} \leq \frac{1}{2}$, respectively.

Using Eq. \eqref{eqn:probWeight} and $\frac{m}{N} = \frac{1}{2}$, the probability where an outcome is correct is:
\begin{eqnarray}
p_0 = p_{01} = \ldots = p_{0t} = 1 - \frac{1}{2} (\theta_0 + \theta_1).
\label{p0}
\end{eqnarray}

Hence, $\mathbb{E}(X_1) = \ldots = \mathbb{E}(X_t) = p_0 $ in Eq. \eqref{eqn:expectWeight} and $\mu_0 = \sum_{i = 1}^c  \mathbb{E} (X_i) = p_0 c$. The condition $ \left( \frac{1}{2} + \xi \right) c \leq (1 - \lambda) \mu$ becomes:
\begin{alignat}{3}
&&  \left( \frac{1}{2} + \xi \right) c &\leq (1 - \lambda) \mu_0 = (1 - \lambda)p_0 c \\
&\Longleftrightarrow& p_0 = 1 - \frac{1}{2} (\theta_0 + \theta_1)  &\geq \frac{1/2 + \xi}{1 - \lambda}  \\
&\Longleftrightarrow& \theta_0 + \theta_1 & \leq 2 \left( 1 - \frac{1/2 + \xi}{1 - \lambda} \right). \label{conditionTheta}
\end{alignat}

Thus, Eq. \eqref{ChernoffX} becomes
\begin{eqnarray}
\Pr \left( X \leq \left( \frac{1}{2} + \xi \right)c \right) \leq \Pr \left(X \leq \left(1 - \lambda \right) \mu_0 \right) \leq \mathrm{exp} \left( -\frac{\lambda^2 \mu_0}{2}  \right). \label{Chernoff2Equal}
\end{eqnarray}

If we carefully choose $\lambda$ and $\xi$, the right term of Eq.~\eqref{conditionTheta} is almost 1. Then, that inequality always holds.

\subsubsection{The case of unequal number of items in each test}
\label{subsub:Dilu2Diff}
We instantiate the model in Fig.~\ref{dilution2} when the number of defective items is at most one ($d = 1$), and the number of items in \textbf{every test} does not \textit{exceed} $N/2$. In the latter sections, we will carefully design such tests. Obviously, the assumption when the number of items in every test is equal to $N/2$ in section~\ref{subsub:Dilu2Equal} is a special case of this assumption. 

Naturally, in biological screening~\cite{griffin2000diabetes,lewis2004measurement}, if the number of items is decreased while keeping the same number of defective items, the probability that the outcome of a test is correct increases. Since the number of items in every test in section~\ref{subsub:Dilu2Equal} is $N/2$ and the number of items in this model is at most $N/2$, one implies $p_{0i} \geq p_0$. Thus, $\mu = \sum_{i = 1}^c  \mathbb{E} (X_i) \geq \sum_{i = 1}^c  p_0 = \mu_0$. Because $\left( \frac{1}{2} + \xi \right) c \leq (1 - \lambda) \mu_0$ (condition in Eq.~\eqref{conditionTheta}) and $\mu_0 \leq \mu$, $\left( \frac{1}{2} + \xi \right) c \leq (1 - \lambda) \mu$. That means Eq.~\eqref{conditionTheta} always holds in this case. Therefore, Eq. \eqref{ChernoffX} becomes
\begin{eqnarray}
\Pr \left( X \leq \left( \frac{1}{2} + \xi \right)c \right) &\leq& \Pr \left(X \leq \left(1 - \lambda \right) \mu \right) \leq \mathrm{exp} \left( -\frac{\lambda^2 \mu}{2}  \right) \\
&\leq& \mathrm{exp} \left( -\frac{\lambda^2 \mu_0}{2}  \right) \label{Chernoff2NotEqual}
\end{eqnarray}

\subsubsection{On the number of defective items in a test}
\label{subsub:more1InTest}
Since Eq.~\eqref{ChernoffX}, Eq.~\eqref{Chernoff2Equal}, and Eq.~\eqref{Chernoff2NotEqual} are only applicable when the number of defective items in a test is up to one, we study the case when the number of defective items in a test is more than one here. Note that the model in Fig.~\ref{dilution2} is not applicable in this case. Let $i$ and $i^\prime$ be the tests having \textit{at most} one defective item and \textit{more than} one defective item, respectively. They also share the same number of items in the tests where the number of items does not exceed $N/2$. Let $p_{0i}^\prime$ be the probability that the outcomes of test $i^\prime$ is correct. From biological screening~\cite{griffin2000diabetes,lewis2004measurement}, for the same number of items in a test, the probability where the outcome is correct increases as the number of defective items increases. Then $p_{0i}^\prime \geq p_{0i} \geq p_0$.

Let denote the event that the outcome of test $i^\prime$ is correct as $X_i^\prime$. $X_i^\prime$ takes 1 if the outcome of test $i$ is correct and 0 otherwise. Then, 
\begin{eqnarray}
\mathbb{E}(X_i^\prime) &=& 1 \times \Pr(X_i^\prime = 1) + 0 \times\Pr(X_i^\prime = 0) \\
&=& \Pr(\mbox{outcome is correct}) = p_{0i}^\prime.
\end{eqnarray}

Assume that test $i^\prime$ is repeated for $c$ independent trials. Let $X_i^\prime$ be a random variable that takes 1 if the outcome of trial $s$ is correct, and 0 otherwise, where $s = 1, \ldots, c$. Let $X^\prime := \sum_{s = 1}^c X_s^\prime$ denote their sum and let $\mu^\prime := \mathbb{E}(X^\prime)$ denote the expectation of $X^\prime$. Again, the variable $X^\prime$ indicates \textbf{how many outcomes are correct} in $c$ trials\footnote{$X^\prime$ is used here to distinguish with notation $X$ in Eq.~\eqref{ChernoffX}. However, they share the same meaning, which indicates how many outcomes are correct.}. Then, $\mu^\prime = \mathbb{E}(X^\prime) = \mathbb{E} \left( \sum_{s = 1}^c X_s^\prime \right) = \sum_{s = 1}^c  \mathbb{E} (X_i^\prime) = \sum_{s = 1}^c p_{0i}^\prime = p_{0i}^\prime c \geq p_0 c = \mu_0$. For some $\xi > 0$ and any $\lambda > 0$, because $\left( \frac{1}{2} + \xi \right) c \leq (1 - \lambda) \mu_0$ (condition in Eq.~\eqref{conditionTheta}), $\left( \frac{1}{2} + \xi \right) c \leq (1 - \lambda) \mu_0 \leq (1 - \lambda) \mu^\prime$. That means Eq.~\eqref{conditionTheta} always holds in this case. Using Chernoff's bound, we have:
\begin{eqnarray}
\Pr \left( X^\prime \leq \left( \frac{1}{2} + \xi \right)c \right) &\leq& \Pr \left(X^\prime \leq \left(1 - \lambda \right) \mu^\prime \right) \leq \mathrm{exp} \left(- \frac{\lambda^2 \mu^\prime}{2} \right) \\
&\leq& \mathrm{exp} \left( -\frac{\lambda^2 \mu_0}{2}  \right). \label{ChernoffGeneral}
\end{eqnarray}

\subsubsection{Summary}
\label{subsub:sumDilution}
\textbf{Model:} Let $\theta_0$ and $\theta_1$ be the probabilities of false positive and false negative in a test in which there is at most one defective item among $N$ items, and the number of items in a test is $N/2$. For some $\xi > 0$ and any $\delta > 0$, assume that $\theta_0 + \theta_1  \leq 2 \left( 1 - \frac{1/2 + \xi}{1 - \lambda} \right)$. Moreover, any test has at most $N/2$ items and there is no restriction on the number of defective items in a test. We also assume that the probability where an outcome is correct is at least $p_0$ (as analyzed in section~\ref{subsub:Dilu2Equal},~\ref{subsub:Dilu2Diff} and~\ref{subsub:more1InTest}).

When the number of items in every tests equals to $N/2$ and there is only one defective item in $N$ item, this model reduces to the model in section~\ref{subsub:Dilu2Equal}. Note that there is \textbf{no explicit} function for operation $\tilde{\otimes}$ when $|\mathbf{x}| \geq 2$. The only general information induced from $|\mathbf{x}|$ is that the probability where the outcome of test $i$, i.e., $\mathcal{T}_{i, *} \tilde{\otimes} \mathbf{x}$, is correct is at least $p_0$.

In this model, the outcome of $t$ tests using a $t \times N$ measurement matrix $\mathcal{T}$ can be formulated using operator $\tilde{\otimes}$ as follows:
\begin{eqnarray}
\mathbf{y} = \mathcal{T} \tilde{\otimes} \mathbf{x} = \begin{bmatrix}
\mathcal{T}_{1, *} \tilde{\otimes} \mathbf{x} \\
\vdots \\
\mathcal{T}_{t, *} \title{\otimes} \mathbf{x}
\end{bmatrix} \in \{0, 1 \}^t.
\end{eqnarray}

Suppose that a test has at most $N/2$ items and is repeated for $c$ independent trials. Let $Y$ be the number of the correct outcomes in $c$ trials. From Eq.~\eqref{Chernoff2Equal}, Eq.~\eqref{Chernoff2NotEqual}, and Eq.~\eqref{ChernoffGeneral}, the following inequality holds without distinction on the number of defective items in the test:
\begin{eqnarray}
\Pr \left( Y \leq \frac{1}{2} c \right) \leq \Pr \left( Y \leq \left( \frac{1}{2} + \xi \right)c \right) \leq \mathrm{exp} \left( -\frac{\lambda^2 \mu_0}{2}  \right), \label{HerculeanChernoff}
\end{eqnarray}
where $\mu_0 = p_0 c = \left(1 - \frac{1}{2}(\theta_0 + \theta_1) \right)c$.

Then, the simple rule to identify the outcome of the test in noiseless setting: if the number of positive (negative) outcome is larger than $\frac{1}{2} c$, the outcome of the test in noiseless setting is positive (negative). Because of Eq.~\eqref{HerculeanChernoff}, our claim is wrong with probability at most $\mathrm{exp} \left(- \frac{\lambda^2 \mu_0}{2} \right)$.

\section{Proposed scheme}
\label{sec:proposed}

\subsection{Overview of divide and conquer strategy}
\label{sub:divide}
Our interest in decoding one defective item was sparked by the schemes proposed by Lee et al.~\cite{lee2016saffron} and Cai et al.~\cite{cai2013grotesque}. These schemes make defective items easier to decode using a divide and conquer strategy. If there are more than one defective item in a test, the test outcome is always positive. Then, it is difficult to identify them because it is not sure that how many defective items are in the test. Therefore, if there is only one defective item in a test, it is easier to find that item. This means there should be at least $d$ tests among $t$ tests, each contains only one defective item and all defective items belongs to those $d$ tests. However, these schemes are \textbf{incompatible} to the dilution type 2 model. Therefore, we propose schemes to solve this problem.

\subsection{Decoding a defective item}
\label{sub:decoding1}

\subsubsection{Overview}
\label{subsub:dec1Overview}

Here we consider the dilution type 2 model in section~\ref{subsub:sumDilution} in which the number of defective items in a test is arbitrary, but the number of items in the test is up to $N/2$. Let $\mathcal{B}$ be a $k \times N$ measurement matrix and $\mathbf{x} = (x_1, \ldots, x_N)^T \in \{0, 1 \}^N$ be the binary representation of $N$ items in which $x_j = 1$ iff item $j$ is defective. Assume that $\mathbf{y}^\prime  \in \{0, 1 \}^k$ is the outcome vector of $k$ tests, i.e., $\mathbf{y}^\prime = \mathcal{B} \otimes \mathbf{x}$. $\mathcal{B}$ is designed as follows:
\begin{enumerate}[(i)]
\item If $\mathbf{y}^\prime$ is a non-zero column of $\mathcal{B}$, the index of the column is always identified.
\item If $\mathbf{y}^\prime$ is a zero column or the union of at least two non-zero columns of $\mathcal{B}$, there is no defective item identified.
\item The Hamming weight of any row of $\mathcal{B}$ is at most $N/2$.
\end{enumerate}
Item $j$ is in vector $\mathbf{r}$ if its corresponding column in a matrix, e.g., $\mathbf{j}$, belongs to $\mathbf{r}$, i.e., $\mathbf{j} \vee \mathbf{r} = \mathbf{r}$. Condition (i) ensures that if there is only one defective item among $N$ items in the outcome vector, it is always identified. Condition (ii) ensures that there is no defective item identified in the case where there is none or more than one defective item in the outcome vector. Condition (iii) ensures that the dilution type 2 model in section~\ref{subsub:sumDilution} holds.

A bigger $K \times N$ matrix $\mathcal{A}$, which is generated from $\mathcal{B}$, is used to deploy tests. To comply the condition in the dilution type 2 model in section~\ref{subsub:sumDilution}, the weights of any rows in $\mathcal{B}$ and $\mathcal{A}$ are at most $N/2$. To avoid ambiguous notations and misunderstanding, $\mathcal{A}$ is denoted for the measurement matrix instead of $\mathcal{T}$ as usual because it will be used in case $d \geq 2$ in the latter section~\ref{sub:decodingD}.

The basic idea of our scheme can be summarized as follows. Given a $k \times N$ matrix $\mathcal{B} = \begin{bmatrix}
\mathcal{B}_{1, *} \\
\vdots \\
\mathcal{B}_{k, *}
\end{bmatrix}$, a $K \times N$ measurement matrix $\mathcal{A}$, where $K = ck$ for some $c > 0$, is created as:
\begin{eqnarray}
\mathcal{A} := \begin{bmatrix}
\mathcal{B}_{1, *} \\
\vdots \\
\mathcal{B}_{1, *} \\
\hline
\vdots \\
\hline
\mathcal{B}_{k, *} \\
\vdots \\
\mathcal{B}_{k, *}
\end{bmatrix}
\begin{array}{l}
\left.
\begin{array}{l}
\ \\
\ \\
\ \\
\end{array}
\right\} \ c \mbox{ times} \\
\quad \ \vdots \\
\left.
\begin{array}{l}
\ \\
\ \\
\ \\
\end{array}
\right\} \ c \mbox{ times}
\end{array}
\end{eqnarray}
\noindent

Our goal is to derive $\mathbf{y}^\prime = \mathcal{B} \otimes \mathbf{x} = \begin{bmatrix}
y_1^\prime \\
\ldots \\
y_k^\prime
\end{bmatrix} \in \{ 0, 1 \}^k$ from
\begin{eqnarray}
\mathbf{y}_* := \begin{bmatrix}
\\
\mathbf{y}_1 \\
\\
\hline 
\vdots \\
\hline 
\\
\mathbf{y}_k \\
\\
\end{bmatrix}:= \begin{bmatrix}
y_{1}^1 \\
\vdots \\
y_{1}^c \\
\hline 
\vdots \\
\hline 
y_{k}^1 \\
\vdots \\
y_{k}^c
\end{bmatrix} =: \begin{bmatrix}
\mathcal{B}_{1, *} \tilde{\otimes} \mathbf{x} \\
\vdots \\
\mathcal{B}_{1, *} \tilde{\otimes} \mathbf{x} \\
\hline
\vdots \\
\hline
\mathcal{B}_{k, *} \tilde{\otimes} \mathbf{x} \\
\vdots \\
\mathcal{B}_{k, *} \tilde{\otimes} \mathbf{x}
\end{bmatrix}
= \mathcal{A} \tilde{\otimes} \mathbf{x} \in \{0, 1 \}^K,
\end{eqnarray}
which is what we observe. Note that $y_i^\prime = \mathcal{B}_{i, *} \otimes \mathbf{x}$ and $\mathbf{y}_i = \begin{bmatrix}
y_i^1 \\
\vdots \\
y_i^c
\end{bmatrix} = \begin{bmatrix}
\mathcal{B}_{i, *} \tilde{\otimes} \mathbf{x} \\
\vdots \\
\mathcal{B}_{i, *} \tilde{\otimes} \mathbf{x}
\end{bmatrix} \in \{0, 1 \}^c$ for $i = 1, \ldots, k$. Then $\mathbf{x}$ can be recovered by using $\mathcal{B}$. On the other hand, $\mathbf{x}$ is efficiently recovered as $\mathcal{B}$ is efficiently decoded. Matrix $\mathcal{B}$ will be addressed in section~\ref{subsub:extSAFFRON}.

The idea can be illustrated here:
\begin{eqnarray}
\mathcal{B} \xrightarrow[\text{ENCODING}]{\text{enlarge}} \mathcal{A} \xrightarrow[\text{ENCODING}]{\tilde{\otimes} \mathbf{x}} \mathbf{y}_* = \mathcal{A} \tilde{\otimes} \mathbf{x} \xrightarrow[\text{DECODING}]{} &&\mathbf{y}^\prime = \mathcal{B} \otimes \mathbf{x} \notag \\
&&\xrightarrow[\text{DECODING}]{\mathcal{B}} \begin{tabular}{@{}c@{}} (No) defective \\ item \end{tabular} \notag
\end{eqnarray}

\subsubsection{The SAFFRON scheme}
\label{subsub:SAFFRON}
\textit{Encoding procedure:} Lee et al.~\cite{lee2016saffron} propose the following $k \times N$ measurement matrix:
\begin{equation}
\label{eqn:bin}
\mathcal{M} := \begin{bmatrix}
\mathbf{b}_1 & \mathbf{b}_2 \ldots \mathbf{b}_N \\
\overline{\mathbf{b}}_1 & \overline{\mathbf{b}}_2 \ldots \overline{\mathbf{b}}_N
\end{bmatrix} =
\begin{bmatrix}
\mathcal{M}_1 \ldots \mathcal{M}_N
\end{bmatrix}
\end{equation}
where $k = 2\log_2{N}$, $\mathbf{b}_j$ is the $\log_2{N}$-bit binary representation of integer $j-1$, $\overline{\mathbf{b}}_j$ is $\mathbf{b}_j$'s complement, and $\mathcal{M}_j := \begin{bmatrix} \mathbf{b}_j \\ \overline{\mathbf{b}}_j \end{bmatrix}$ for $j = 1,2,\ldots, N$. Remember that $\mathrm{wt}(\mathcal{M}_j) = \log_2{N}$ and $\mathrm{wt}(\mathcal{M}_{i, *}) = N/2$ for $i = 1, \ldots, k$.

For any $\mathbf{x} = (x_1, \ldots, x_N)^T \in \{0, 1 \}^N$, we have:
\begin{equation}
\mathcal{M} \otimes \mathbf{x} = \bigvee_{j = 1}^N x_j\mathcal{M}_j = \bigvee_{\substack{j = 1 \\ x_j = 1}}^N \mathcal{M}_j
\label{SaffronEq}
\end{equation}

\textit{Decoding procedure:} Thanks to the fact that the Hamming weight for each column of $\mathcal{M}$ is $\log_2{N}$, given any vector of size $k = 2\log_2{N}$, if its Hamming weight equals to $k/2 = \log_2{N}$, the index of the defective item is derived from its first half ($\mathbf{b}_j$ for some $j \in \{1, \ldots, N \}$). If a vector is the union of at least 2 columns of $\mathcal{M}$ or zero vector, the Hamming weight of that vector is never equal to $\log_2{N}$ because of Eq.~\eqref{SaffronEq}. Therefore, given an input outcome vector, we can either identify the defective item or there is no single defective item in the outcome vector. The latter case is considered that no defective item in the outcome vector.

For example, let us consider the case $N = 8, k = 2\log_2{N} = 6$. Assume that
\begin{eqnarray}
\label{eqn:M}
\mathcal{M} = 
\begin{bmatrix}
0 & 0 & 0 & 0 & 1 & 1 & 1 & 1 \\
0 & 0 & 1 & 1 & 0 & 0 & 1 & 1 \\
0 & 1 & 0 & 1 & 0 & 1 & 0 & 1 \\
\hline
1 & 1 & 1 & 1 & 0 & 0 & 0 & 0 \\
1 & 1 & 0 & 0 & 1 & 1 & 0 & 0 \\
1 & 0 & 1 & 0 & 1 & 0 & 1 & 0 \\
\end{bmatrix},
\end{eqnarray}
\noindent
and the outcome vectors are $\mathbf{m}_1 = (0, 0, 0, 1, 1, 1)^T$ and $\mathbf{m}_2 = (1, 0, 1, 1, 1, 0)^T$. Since $\mathrm{wt}(\mathbf{m}_1) = 3 = \log_2{N}$, the index of the defective item is 1 because its first half is $(0, 0, 0)$. Similarly, since $\mathrm{wt}(\mathbf{m}_2) = 4 \neq 3 = \log_2{N}$, there is no defective item corresponding to this outcome vector $\mathbf{m}_2$.

\subsubsection{Extension of the SAFFRON scheme}
\label{subsub:extSAFFRON}
Given $\mathbf{u} = (u_1, \ldots, u_L) \in \{0, 1\}^L$ and $\mathbf{v} = (v_1, \ldots, v_L) \in \{ 0, 1\}^L$, we define $\mathbf{u} \wedge \mathbf{v} := \left(u_1 \wedge v_1, \ldots, u_L \wedge v_L \right)$. Assume that $\mathbf{g} = (g_1, \ldots, g_N) \in \{0, 1 \}^N$, then $\mathcal{B}$ is constructed as follows:
\begin{equation}
\label{matrixB}
\mathcal{B} := \mathcal{M} \times \mathrm{diag}(\mathbf{g}) = \begin{bmatrix}
g_{1}\mathcal{M}_1 & \ldots & g_{N}\mathcal{M}_N
\end{bmatrix} = \begin{bmatrix}
\mathcal{M}_{1, *} \wedge \mathbf{g} \\
\vdots \\
\mathcal{M}_{k, *} \wedge \mathbf{g} \\
\end{bmatrix} = \begin{bmatrix}
\mathcal{B}_{1, *} \\
\vdots \\
\mathcal{B}_{k, *}
\end{bmatrix}
\end{equation}
where $\mathrm{diag}(\mathbf{g}) = \mathrm{diag}(g_{1}, \ldots, g_{N})$ is the diagonal matrix constructed by input vector $\mathbf{g}$, and $\mathcal{B}_{i, *} = \mathcal{M}_{i, *} \wedge \mathbf{g}$ for $i = 1, \ldots, k$. It is easy to confirm that $\mathcal{B} = \mathcal{M}$ when $\mathbf{g}$ is the vector of all ones, i.e., $\mathbf{g} = \mathbf{1} = (1, 1, \ldots, 1) \in \{ 1 \}^N$ or $\mathrm{diag}(\mathbf{g})$ is the $N \times N$ identity matrix. Moreover, the weight of any row of $\mathcal{B}$ is at most $N/2$.

Given $\mathbf{x} = (x_1, \ldots, x_N)^T \in \{ 0, 1\}^N$, we have:
\begin{eqnarray}
\mathbf{y}^\prime &=& \mathcal{B} \otimes \mathbf{x} = \begin{bmatrix}
\mathcal{B}_{1, *} \otimes \mathbf{x} \\
\vdots \\
\mathcal{B}_{k, *} \otimes \mathbf{x}
\end{bmatrix} = 
\begin{bmatrix}
\left( \mathcal{M}_{1, *} \wedge \mathbf{g} \right) \otimes \mathbf{x} \\
\vdots \\
\left( \mathcal{M}_{k, *} \wedge \mathbf{g} \right) \otimes \mathbf{x}
\end{bmatrix} \\
&=& \bigvee_{j = 1}^N x_j g_{j} \mathcal{M}_j = \bigvee_{\substack{j = 1 \\ x_j = 1}}^N g_{j} \mathcal{M}_j = \bigvee_{\substack{j = 1 \\ x_j g_{j} = 1}}^N \mathcal{M}_j.
\label{extSaffron}
\end{eqnarray}
Therefore, the outcome vector $\mathbf{y}^\prime$ is the union of at most $|\mathbf{x}|$ columns of $\mathcal{M}$. Because of the decoding of the SAFFRON scheme, given an input outcome vector, we can either identify the only one defective item presented in it or there is no single defective item in the outcome vector. That means conditions (i) and (ii) in section~\ref{subsub:dec1Overview} holds. Condition (iii) also holds because of the construction of $\mathcal{B}$. Note that even when there is only one defective item $j$, i.e, $|\mathbf{x}| = 1$ and $x_j =1$, the defective item cannot be identified if $g_j = 0$.

\subsubsection{Encoding procedure}
\label{subsub:enc1}
Since matrix $\mathcal{B}$ in Eq.~\eqref{matrixB} is insufficient to identify the defective item in the dilution type 2 model, we are going to design a bigger $K \times N$ matrix $\mathcal{A}$ as follows. For any $\delta > 0$, let denote $c = \frac{2\ln(2\log_2{N}/\delta)}{p_0 \lambda^2}$ for some $\lambda > 0$ and $p_0$ is defined in Eq.~\eqref{p0}. Then the measurement matrix $\mathcal{A}$ is designed as follows:
\begin{eqnarray}
\mathcal{A} &:=& \begin{bmatrix}
\mathcal{B}_{1, *} \\
\vdots \\
\mathcal{B}_{1, *} \\
\hline
\vdots \\
\hline
\mathcal{B}_{k, *} \\
\vdots \\
\mathcal{B}_{k, *}
\end{bmatrix} = \begin{bmatrix}
\mathcal{M}_{1, *} \wedge \mathbf{g}\\
\vdots \\
\mathcal{M}_{1, *} \wedge \mathbf{g} \\
\hline
\vdots \\
\hline
\mathcal{M}_{k, *} \wedge \mathbf{g} \\
\vdots \\
\mathcal{M}_{k, *} \wedge \mathbf{g}
\end{bmatrix}
\begin{array}{l}
\left.
\begin{array}{l}
\ \\
\ \\
\ \\
\end{array}
\right\} \ c \mbox{ times} \\
\quad \ \vdots \\
\left.
\begin{array}{l}
\ \\
\ \\
\ \\
\end{array}
\right\} \ c \mbox{ times}
\end{array}; \ 
\mathcal{A}^\star := 
\begin{bmatrix}
\mathcal{M}_{1, *} \\
\vdots \\
\mathcal{M}_{1, *} \\
\hline
\vdots \\
\hline
\mathcal{M}_{k, *} \\
\vdots \\
\mathcal{M}_{k, *}
\end{bmatrix}
\begin{array}{l}
\left.
\begin{array}{l}
\ \\
\ \\
\ \\
\end{array}
\right\} \ c \mbox{ times} \\
\quad \ \vdots \\
\left.
\begin{array}{l}
\ \\
\ \\
\ \\
\end{array}
\right\} \ c \mbox{ times}
\end{array} \label{matrixA} \\ 
\mathbf{y}_* &:=& \mathcal{A} \tilde{\otimes} \mathbf{x} = \begin{bmatrix}
\mathcal{B}_{1, *} \tilde{\otimes} \mathbf{x} \\
\vdots \\
\mathcal{B}_{1, *} \tilde{\otimes} \mathbf{x}\\
\hline
\vdots \\
\hline
\mathcal{B}_{k, *}  \tilde{\otimes} \mathbf{x}\\
\vdots \\
\mathcal{B}_{k, *} \tilde{\otimes} \mathbf{x}
\end{bmatrix} = \begin{bmatrix}
y_{1}^1 \\
\vdots \\
y_{1}^c \\
\hline 
\vdots \\
\hline 
y_{k}^1 \\
\vdots \\
y_{k}^c
\end{bmatrix} =
\begin{bmatrix}
\\
\mathbf{y}_1 \\
\\
\hline 
\vdots \\
\hline 
\\
\mathbf{y}_k \\
\\
\end{bmatrix}; \quad \mathbf{y}_i = \begin{bmatrix}
y_i^1 \\
\vdots \\
y_i^c
\end{bmatrix} = \begin{bmatrix}
\mathcal{B}_{i, *} \tilde{\otimes} \mathbf{x} \\
\vdots \\
\mathcal{B}_{i, *} \tilde{\otimes} \mathbf{x}
\end{bmatrix} \label{eqn:1Defec}
\end{eqnarray}
where every row $\mathcal{B}_{i, *}$ is repeated $c$ times, and $K = c \times k =  O \left( \frac{\log_2{N} \times \ln(\log_2{N}/\delta)}{p_0 \lambda^2} \right)$ for $i = 1, \ldots, k$. When $\mathbf{g} = \mathbf{1}$, $\mathcal{A}$ is denoted as $\mathcal{A}^\star$. Straightforwardly, $\mathcal{A}^\star$ is nonrandomly constructed. If $\mathbf{g}$ is nonrandomly constructed, then $\mathcal{A}$ is also nonrandomly constructed.

Assume that $\mathbf{y}_* \in \{0, 1 \}^K$ is the outcome vector we observe. Then $\mathbf{y}_* = \mathcal{A} \tilde{\otimes} \mathbf{x}$ as Eq.~\eqref{eqn:1Defec}, where $y_i^l = \mathcal{B}_{i, *} \tilde{\otimes} \mathbf{x} \in \{0, 1 \}$, and $\mathbf{y}_i = \begin{bmatrix}
y_i^1 \\
\vdots \\
y_i^c
\end{bmatrix} \in \{0, 1 \}^c$ for $i = 1, \ldots, k$ and $l = 1, \ldots, c$. Remember that each $y_i^l$ has the probabilities of false positive and false negative such that the probability where the outcome is correct is at least $p_0$. In addition, the weight of every row of $\mathcal{A}$ is at most $N/2$.

This procedure is illustrated as follows:
\begin{eqnarray}
\mathcal{B} \xrightarrow{\text{enlarge}} \mathcal{A} \xrightarrow{\tilde{\otimes} \mathbf{x}} \mathbf{y}_* = \mathcal{A} \tilde{\otimes} \mathbf{x} = \begin{bmatrix}
\mathbf{y}_1 \\
\vdots \\
\mathbf{y}_k
\end{bmatrix} \notag
\end{eqnarray}

\subsubsection{Decoding procedure}
\label{subsub:dec1}
The decoding procedure is described as Algorithm~\ref{alg:decoding1Defect} with descriptions, and illustrated as follows:
\begin{eqnarray}
\mathbf{y}_* = \begin{bmatrix}
\mathbf{y}_1 \\
\vdots \\
\mathbf{y}_k
\end{bmatrix}
\begin{array}{l}
\xrightarrow[\text{\ref{subsub:sumDilution}}]{\text{Section}} y^\prime_1 = \mathcal{B}_{1, *} \otimes \mathbf{x} \\
\ \quad \vdots \\
\xrightarrow[\text{\ref{subsub:sumDilution}}]{\text{Section}} y^\prime_k = \mathcal{B}_{k, *} \otimes \mathbf{x}
\end{array}
\left.
\begin{array}{l}
\\
\\
\\
\end{array}
\right\}
\rightarrow \mathbf{y}^\prime = \mathcal{B} \otimes \mathbf{x}
\xrightarrow[\text{\ref{subsub:extSAFFRON}}]{\text{Section}} \mathrm{wt}(\mathbf{y}^\prime)
\rightarrow \begin{tabular}{@{}c@{}} (No) \\ defective \\ item \end{tabular} \notag
\end{eqnarray}

The function $\mathrm{base10}(\cdot)$ converts the binary input vector into a decimal number. For example, $\mathrm{base10}(1, 0, 1) = 1 \times 2^0 + 0 \times 2^1 + 1 \times 2^2 = 5$. This procedure is briefly described here: step 1 initializes the value of the defective item. If there is no defective item in $\mathbf{y}_*$, the algorithm returns $-1$. Step 4 scans all outcomes. Steps 4--11 recover $y^\prime_i = \mathcal{B}_{i, *} \otimes \mathbf{x}$ from $\mathbf{y}_i$. As steps 4--11 finished running, one gets $\mathbf{y}^\prime = \mathcal{B} \otimes \mathbf{x}$. Step 12 is to check whether there is only one defective item in $\mathbf{y}^\prime$ as described in section~\ref{subsub:extSAFFRON}. Step 13 is to compute the index of the defective item if it is available.

\begin{algorithm}
\caption{$\mathrm{Dec1Defect}(\mathbf{y}_*, N, c)$}
\label{alg:decoding1Defect}
\textbf{Input:} Outcome vector $\mathbf{y}_*$, number of items $N$, a constant $c$.\\
\textbf{Output:} The defective item  $j_0$ (if available).

\begin{algorithmic}[1]
\STATE $j_0 = -1$; \COMMENT{The initial index for the defective item.}
\STATE $k = 2 \times \log_2{N}$; \COMMENT{The number of rows in $\mathcal{M}$.}
\STATE $\mathbf{y}^\prime = (0, \ldots, 0)^T$; \COMMENT{The initial value for $\mathbf{y}^\prime$.}
\FOR [Scan from $\mathbf{y}_1$ to $\mathbf{y}_k$]{$i=1$ \TO $k$}
	\IF [\#1s in $\mathbf{y}_i$ is dominant] {$\mathrm{wt}(\mathbf{y}_i) > \frac{1}{2} c$}
		\STATE $y_i^\prime = 1$. \COMMENT{The true outcome is 1.}
	\ENDIF
	\IF [\#0s in $\mathbf{y}_i$ is dominant] {$c - \mathrm{wt}(\mathbf{y}_i) > \frac{1}{2} c$}
		\STATE $y_i^\prime = 0$. \COMMENT{The true outcome is 0.}
	\ENDIF
\ENDFOR
\IF [There exists only one defective item.] {$\mathrm{wt}(\mathbf{y}^\prime) = \log_2{N}$}
	\STATE $j_0 = \mathrm{base10}(y_1^\prime, \ldots, y_{k/2}^\prime)$. \COMMENT{Identify the defective item.}
\ENDIF
\STATE Return $j_0$. \COMMENT{Return the defective item.}
\end{algorithmic}
\end{algorithm}

\subsection{Decoding $d \geq 2$ defective items}
\label{sub:decodingD}
We present the decoding algorithm for the case of at most $d$ defective items. The basic ideas are: a measurement matrix is created when the number of items in each test does not exceed $N/2$; and there are at least $d$ tests where each contains only one defective item and all defective items belong to them. Each specific test is associated with the matrix $\mathcal{A}$ in Eq. \eqref{matrixA} for efficient identification.

\subsubsection{Encoding procedure}
\label{subsub:encD}

There are two steps in creating a $t \times N$ measurement matrix $\mathcal{T}$. The first step is to create a $h \times N$ matrix $\mathcal{G}$ having the property as follows:
\begin{enumerate}{\roman{enumii}}
\item The row weight is constant and not exceeded $N/2$.
\item For every $d$ columns of $\mathcal{G}$ (representing for $d$ potential defective items), denoted $j_1, \ldots, j_d$, with one designated column, e.g., $j_1$, there exists a row, e.g., $i_1$, such that $g_{i_1 j_1} = 1$ and $g_{i_1 j_a} = 0$, where $a \in \{2, 3, \ldots, d \}$. In short, for every $d$ columns of $\mathcal{G}$, there exists an $d \times d$ identity matrix constructed by those $d$ columns.
\end{enumerate}
The first condition is to make the dilution type 2 model in section~\ref{subsub:sumDilution} hold. The second condition is to guarantee that at most $d$ defective items can be identified. To fulfill these conditions, $\mathcal{G}$ is chosen as a $(d-1)$-disjunct matrix generated in Lemma~\ref{lem:ConstDisjunct}. We note that $h < d^2 \log_2^2{N}$ because of the construction of $\mathcal{G}$ in the Appendix~\ref{app:Thr1}.

Then, the second step is to create a $K \times N$ signature matrix $\mathcal{A}$ such that given an arbitrary column of $\mathcal{A}$, its column index in $\mathcal{A}$ can be efficiently identified. $\mathcal{A}$ is chosen as $\mathcal{A}^\star$ in Eq. \eqref{matrixA} for $c = \frac{2 \ln(2d^2 \log_2^3{N}/\delta)}{p_0 \lambda^2}$, and $K = ck$.

Finally, the measurement matrix $\mathcal{T}$ is created as follows:
\begin{eqnarray}
\label{matrixT}
\mathcal{T} &:=& \begin{bmatrix}
\mathcal{A}^\star \times \mathrm{diag}(\mathcal{G}_{1, *}) \\
\vdots \\
\mathcal{A}^\star \times \mathrm{diag}(\mathcal{G}_{h, *})
\end{bmatrix} = \begin{bmatrix}
\mathcal{A}^1 \\
\vdots \\
\mathcal{A}^h
\end{bmatrix};
\end{eqnarray}
\noindent
where $t = h \times K = h \times c \times k = O \left( \frac{d^2\log_2^3{N} \times \ln(d^2\log_2^3{N}/\delta)}{p_0 \lambda^2} \right)$ and $\mathcal{A}^w := \mathcal{A}^\star \times \mathrm{diag}(\mathcal{G}_{w, *}) \in \{0, 1 \}^{K \times N}$for $w = 1, \ldots, h$. We note that each $\mathcal{A}^w$ is created by setting $\mathbf{g} = \mathcal{G}_{w, *}$ in Eq.~\eqref{matrixA}. It is equivalent that $\mathcal{A}^w$ is created from $\mathcal{B}^w = \mathcal{M} \times \mathrm{diag}(\mathcal{G}_{w, *})$, which is achieved by setting $\mathbf{g} = \mathcal{G}_{w, *}$ in Eq.~\eqref{matrixB}. The precise formulas of $\mathcal{B}^w$ and $\mathcal{A}^w$ are defined in Eq.~\eqref{Bi} and Eq.~\eqref{Ai}.
\begin{eqnarray}
\mathcal{B}^w &:=& \mathcal{M} \times \mathrm{diag}(\mathcal{G}_{w, *}) = \begin{bmatrix}
\mathcal{M}_{1, *} \wedge \mathcal{G}_{w, *} \\
\vdots \\
\mathcal{M}_{k, *} \wedge \mathcal{G}_{w, *}
\end{bmatrix} = \begin{bmatrix}
\mathcal{B}^w_{1, *} \\
\vdots \\
\mathcal{B}^w_{k, *}
\end{bmatrix} \label{Bi} \\
\mathcal{A}^w &:=& \mathcal{A}^\star \times \mathrm{diag}(\mathcal{G}_{w, *}) = \begin{bmatrix}
\mathcal{B}^w_{1, *} \\
\vdots \\
\mathcal{B}^w_{1, *} \\
\hline
\vdots \\
\hline
\mathcal{B}^w_{k, *} \\
\vdots \\
\mathcal{B}^w_{k, *}
\end{bmatrix}
\begin{array}{l}
\left.
\begin{array}{l}
\ \\
\ \\
\ \\
\end{array}
\right\} \ c \mbox{ times} \\
\quad \ \vdots \\
\left.
\begin{array}{l}
\ \\
\ \\
\ \\
\end{array}
\right\} \ c \mbox{ times}
\end{array} \label{Ai}
\end{eqnarray}

Since the row weight of $\mathcal{G}_{w, *}$ does not exceed $N/2$, the weight of every row in $\mathcal{A}^w$ does not exceed $N/2$. Therefore, the weight of every row in $\mathcal{T}$ does not exceed $N/2$. In addition, $\mathcal{T}$ is nonrandomly constructed because $\mathcal{A}^\star$ and $\mathcal{G}$ are nonrandomly constructed.

The outcome of tests using the measurement matrix $\mathcal{T}$ is:
\begin{equation}
\label{OutcomeMatrixT}
\mathbf{y} = \mathcal{T} \tilde{\otimes} \mathbf{x} = \begin{bmatrix}
\mathcal{A}^1 \tilde{\otimes} \mathbf{x} \\
\vdots \\
\mathcal{A}^h \tilde{\otimes} \mathbf{x}
\end{bmatrix} = \begin{bmatrix}
\mathbf{y}_*^1 \\
\vdots \\
\mathbf{y}_*^h
\end{bmatrix} \in \{0, 1 \}^t
\end{equation}
where
\begin{equation}
\mathbf{y}_*^w := \mathcal{A}^w \tilde{\otimes} \mathbf{x} = 
\begin{bmatrix}
\mathcal{B}^w_{1, *} \tilde{\otimes} \mathbf{x} \\
\vdots \\
\mathcal{B}^w_{1, *} \tilde{\otimes} \mathbf{x} \\
\hline
\vdots \\
\hline
\mathcal{B}^w_{k, *} \tilde{\otimes} \mathbf{x} \\
\vdots \\
\mathcal{B}^w_{k, *} \tilde{\otimes} \mathbf{x}
\end{bmatrix} 
\begin{array}{l}
\left.
\begin{array}{l}
\ \\
\ \\
\ \\
\end{array}
\right\} \ c \mbox{ times} \\
\quad \ \vdots \\
\left.
\begin{array}{l}
\ \\
\ \\
\ \\
\end{array}
\right\} \ c \mbox{ times}
\end{array} = \begin{bmatrix}
\\
\mathbf{y}_1^w \\
\\
\hline 
\vdots \\
\hline 
\\
\mathbf{y}_k^w \\
\\
\end{bmatrix}
\in \{0, 1 \}^K
\label{yi}
\end{equation}
and
\begin{equation}
\mathbf{y}_i^w = \begin{bmatrix}
\mathcal{B}^w_{i, *} \tilde{\otimes} \mathbf{x} \\
\vdots \\
\mathcal{B}^w_{i, *} \tilde{\otimes} \mathbf{x}
\end{bmatrix}
\end{equation}
for $i = 1, \ldots, k$ and $w = 1, \ldots, h$.

This procedure is illustrated as follows:
\begin{eqnarray}
\mathcal{A}^\star, \mathcal{G} \rightarrow \mathcal{T} = \begin{bmatrix}
\mathcal{A}^\star \times \mathrm{diag}(\mathcal{G}_{1, *}) \\
\vdots \\
\mathcal{A}^\star \times \mathrm{diag}(\mathcal{G}_{h, *})
\end{bmatrix}
\xrightarrow{\tilde{\otimes} \mathbf{x}} \mathbf{y} = \mathcal{T} \tilde{\otimes} \mathbf{x} = \begin{bmatrix}
\mathbf{y}_*^1 \\
\vdots \\
\mathbf{y}_*^h
\end{bmatrix} \notag
\end{eqnarray}

\subsubsection{Decoding procedure}
\label{subsub:decD}
To identify the defective items, one simply decodes each $\mathbf{y}_*^w$ by using Algorithm~\ref{alg:decoding1Defect}, for $w = 1, \ldots, h$. The decoding procedure is summarized in Algorithm~\ref{alg:decodingDDefect} and illustrated here:
\begin{eqnarray}
\mathbf{y} = \begin{bmatrix}
\mathbf{y}_*^1 \\
\vdots \\
\mathbf{y}_*^h
\end{bmatrix}
\begin{array}{l}
\xrightarrow{\mathrm{Dec1Defect}(\mathbf{y}_*^1, N, c)} \text{(No) defective item} \\
\qquad \qquad \vdots \\
\xrightarrow{\mathrm{Dec1Defect}(\mathbf{y}_*^h, N, c)} \text{(No) defective item}
\end{array} \notag
\end{eqnarray}

\begin{algorithm}
\caption{Decoding at most $d$ defective items}
\label{alg:decodingDDefect}
\textbf{Input:} Outcome vector $\mathbf{y}$, number of items $N$, a constant $c$, $h$.\\
\textbf{Output:} Set of defective items.

\begin{algorithmic}[1]
\STATE $S = \emptyset$. \COMMENT{Create an empty set of defective items.}
\FOR[Scan all outcome vectors]{$w = 1$ \TO $h$} 
	\STATE $index = \mathrm{Dec1Defect}(\mathbf{y}_*^w, N, c)$. \COMMENT{Find the defective item in $\mathbf{y}_*^w$.}
	\IF[If there is a defective item in $\mathbf{y}_*^w$.]{$index \neq -1$} 
		\STATE $S = S + \{ index \}$.  \COMMENT{Add the defective item into the defective set.}
	\ENDIF
\ENDFOR

\STATE Return $S$. \COMMENT{Return the defective set.}
\end{algorithmic}
\end{algorithm}

\section{Main results}
\label{sec:main}

\subsection{Decoding a defective item}
\label{sub:main1}

Assume that there is only one defective item in $N$ items. Since the model in the section~\ref{subsub:Dilu2Equal} is an instance of the model in the section~\ref{subsub:sumDilution}, the following theorem is to summarize the encoding and decoding procedures in section~\ref{sub:decoding1}:
\begin{theorem}
\label{thr:1Defect}
Let $1 < N$ be an integer, $0 \leq \theta_{0}, \theta_{1} \leq \frac{1}{2}$ be real scalars, and $0 < \lambda, \xi$. Suppose that $\theta_0 + \theta_1 \leq 2 \left( 1 - \frac{1/2 + \xi}{1 - \lambda} \right)$. Assume that there is at most one defective item in $N$ items. And noise in the test outcome depends on the number of items in a test under dilution type 2 model in the section~\ref{subsub:Dilu2Equal}, in which the probabilities of a false positive and of a false negative are $\theta_0$ and $\theta_1$, and the number of items in every test is $N/2$. For any $\delta > 0$, there exists a nonrandomly constructed $K \times N$ matrix, where $K = O \left( \frac{\log_2{N} \times \ln(\log_2{N}/\delta)}{\left( 1 - \frac{1}{2}(\theta_0 + \theta_1) \right) \lambda^2} \right)$, such that a defective item can be identified in time $O(K)$ with probability at least $1 - \delta$.
\end{theorem}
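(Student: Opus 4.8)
The plan is to instantiate the general machinery of Section~\ref{sub:decoding1} at $\mathbf{g} = \mathbf{1}$. Since there is at most one defective item among all $N$ items and each test must contain exactly $N/2$ items, the extension vector plays no role, so I take $\mathcal{B} = \mathcal{M}$ (the SAFFRON matrix of Eq.~\eqref{eqn:bin}) and let the measurement matrix be $\mathcal{A}^\star$ from Eq.~\eqref{matrixA}, i.e. the $K \times N$ matrix obtained by repeating each of the $k = 2\log_2{N}$ rows of $\mathcal{M}$ exactly $c$ times, where $c = \frac{2\ln(2\log_2{N}/\delta)}{p_0\lambda^2}$ and $p_0 = 1 - \frac{1}{2}(\theta_0 + \theta_1)$ as in Eq.~\eqref{p0}. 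This matrix is nonrandomly constructed because $\mathcal{M}$ is, every row has weight exactly $N/2$ so the dilution type 2 model of Section~\ref{subsub:Dilu2Equal} applies, and $K = ck = 2 c \log_2{N} = O\!\left( \frac{\log_2{N}\,\ln(\log_2{N}/\delta)}{p_0 \lambda^2} \right)$, which is the claimed size once $p_0$ is written out.

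Next I would argue correctness of the majority-vote decoding in Algorithm~\ref{alg:decoding1Defect}. Each distinct SAFFRON test $\mathcal{M}_{i,*}$ is run as $c$ independent trials, and the hypothesis $\theta_0 + \theta_1 \leq 2\left(1 - \frac{1/2+\xi}{1-\lambda}\right)$ is exactly condition~\eqref{conditionTheta}, which guarantees $(\tfrac12 + \xi)c \leq (1-\lambda)\mu_0$ with $\mu_0 = p_0 c$, so the Chernoff bound~\eqref{HerculeanChernoff} applies verbatim. Hence the probability that the majority vote fails to return the true noiseless outcome $y_i^\prime = \mathcal{M}_{i,*} \otimes \mathbf{x}$ is at most $\mathrm{exp}\!\left(-\frac{\lambda^2 \mu_0}{2}\right) = \mathrm{exp}\!\left(-\frac{\lambda^2 p_0 c}{2}\right)$. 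The value of $c$ is calibrated so this equals $\frac{\delta}{2\log_2{N}}$, and a union bound over the $k = 2\log_2{N}$ tests bounds the probability that \emph{any} vote errs by $\delta$. Thus with probability at least $1 - \delta$ the algorithm recovers $\mathbf{y}^\prime = \mathcal{M}\otimes\mathbf{x}$ exactly.

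Conditioned on recovering $\mathbf{y}^\prime$ correctly, I would invoke the SAFFRON decoding property of Sections~\ref{subsub:SAFFRON} and~\ref{subsub:extSAFFRON}: because each column of $\mathcal{M}$ has weight $\log_2{N}$ and the union of two or more columns (or the zero vector) never has this weight, the test $\mathrm{wt}(\mathbf{y}^\prime) = \log_2{N}$ correctly distinguishes ``exactly one defective'' from ``none'', and in the former case $\mathrm{base10}$ of the first half returns the defective index. For the running time, the majority votes cost $O(K)$ in total, while the final weight check and index conversion cost $O(\log_2{N}) = O(K)$, giving decoding time $O(K)$.

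I expect the only delicate point to be the bookkeeping that ties the per-test Chernoff failure probability to the global error: the factor $2\log_2{N}$ inside the logarithm defining $c$ is precisely what makes the union bound over all $k = 2\log_2{N}$ tests close at $\delta$, and I would double-check that the hypothesis on $\theta_0 + \theta_1$ is the exact inequality needed to license~\eqref{HerculeanChernoff}. Everything else is substitution, so the proof is essentially an assembly of the pieces already built in Section~\ref{sec:proposed}.
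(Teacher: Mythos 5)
Your proposal is correct and follows essentially the same route as the paper's own proof: instantiate $\mathcal{A}^\star$ (i.e.\ $\mathbf{g} = \mathbf{1}$, $\mathcal{B} = \mathcal{M}$), apply the Chernoff bound~\eqref{HerculeanChernoff} to get a per-row majority-vote failure probability of $\frac{\delta}{2\log_2{N}}$, union bound over the $k = 2\log_2{N}$ rows, and finish with the SAFFRON weight-check decoding in time $O(K)$. The only differences are cosmetic — you spell out the SAFFRON step and the calibration of $c$ slightly more explicitly than the paper does.
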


\begin{proof}
Let $\mathcal{A}^\star$ in Eq. \eqref{matrixA} be the measurement matrix. Then $\mathcal{B} = \mathcal{M}$ and $\mathbf{g} = \mathbf{1}$ in Eq.~\eqref{Bi}. Then $\mathcal{A}^\star$ can be nonrandomly constructed because of its construction. Since the Hamming weight of its row is $N/2$, the number of items in each test using $\mathcal{A}^\star$ is equal to $N/2$. On the other hand, $\theta_0 + \theta_1 \leq 2 \left( 1 - \frac{1/2 + \xi}{1 - \lambda} \right)$. Therefore, the dilution type 2 model in the section~\ref{subsub:Dilu2Equal} can be applied here.

From the construction of $\mathcal{A}^\star$, we have
\begin{equation}
K = O \left( \frac{\log_2{N} \times \ln(\log_2{N}/\delta)}{p_0 \lambda^2} \right) =
O \left( \frac{\log_2{N} \times \ln(\log_2{N}/\delta)}{\left( 1 - \frac{1}{2}(\theta_0 + \theta_1) \right) \lambda^2} \right).
\end{equation}

We start analyzing Algorithm~\ref{alg:decoding1Defect} now. Steps 5--10 are to recover the correct outcome. Using Eq. \eqref{HerculeanChernoff}, the probability that $y_i^\prime = \mathcal{B}_{i, *} \otimes \mathbf{x}$ is wrongly derived from $\mathbf{y}_i = \begin{bmatrix}
\mathcal{B}_{i, *} \tilde{\otimes} \mathbf{x} \\
\vdots \\
\mathcal{B}_{i, *} \tilde{\otimes} \mathbf{x}
\end{bmatrix}$ is at most:
\begin{eqnarray}
\mathrm{exp} \left(- \frac{\lambda^2 \mu_0}{2} \right) \leq \mathrm{exp} \left( - \frac{\lambda^2}{2} \cdot \frac{2}{\lambda^2} \cdot \ln \frac{2\log_2{N}}{\delta} \right) = \frac{\delta}{2\log_2{N}}. \label{cmu0}
\end{eqnarray}

Since $i = 1,\ldots, k$, using union bound, the total probability that Algorithm~\ref{alg:decoding1Defect} fails to recover \textbf{any} $y_i^\prime = \mathcal{B}_{i, *} \otimes \mathbf{x}$, i.e., recover $\mathbf{y}^\prime = \mathcal{B} \otimes \mathbf{x}$, is at most:
\begin{equation}
k \times \frac{\delta}{2\log_2{N}} = 2\log_2{N} \times \frac{\delta}{2\log_2{N}} = \delta.
\label{failDec1}
\end{equation}
Thus, Algorithm~\ref{alg:decoding1Defect} can recover the defective item with probability at least $1 - \delta$. Since $\mathbf{y}^\prime = \mathcal{B} \otimes \mathbf{x} = \mathcal{M} \otimes \mathbf{x}$, the defective item can be identified by using the decoding procedure in the section~\ref{subsub:SAFFRON}. Since we scan the test outcomes once, the decoding complexity of the algorithm is $K = O(K)$. $\blacksquare$
\end{proof}

\subsection{Decoding $d \geq 2$ defective items}
\label{sub:mainD}

From the encoding and decoding procedures in the section~\ref{sub:decodingD}, the probability of successfully identifying at most $d$ defective items is stated as follows:

\begin{theorem}
\label{thr:DDefect}
Let $1 < d < N$ be integers, $0 \leq \theta_{0}, \theta_{1} \leq \frac{1}{2}$ be real scalars, and $0 < \lambda, \xi$. Assume that $\theta_0 + \theta_1 \leq 2 \left( 1 - \frac{1/2 + \xi}{1 - \lambda} \right)$. Noise in the test outcome depends on the number of items in that test under dilution type 2 model in the section~\ref{subsub:sumDilution} in which the number of items in a test is up to $N/2$, and the probability where an outcome is correct is at least $p_0 = 1 - \frac{1}{2}(\theta_0 + \theta_1)$. For any $\delta > 0$, there exists a nonrandomly constructed $t \times N$ matrix $\mathcal{T}$, where $t = O \left( \frac{d^2\log_2^3{N} \times \ln(d^2\log_2^3{N}/\delta)}{\left( 1 - \frac{1}{2}(\theta_0 + \theta_1) \right) \lambda^2} \right)$, such that at most $d$ defective item can be identified in time $O(t)$ with probability at least $1 - \delta$.
\end{theorem}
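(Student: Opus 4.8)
The plan is to lift the single-defective analysis of Theorem~\ref{thr:1Defect} to the multiple-defective setting by exploiting the disjunctness of $\mathcal{G}$, and then to control the aggregate error through a union bound. First I would record the structural properties of the construction in Eq.~\eqref{matrixT}: since each row $\mathcal{G}_{w,*}$ has weight at most $N/2$, every block $\mathcal{A}^w = \mathcal{A}^\star \times \mathrm{diag}(\mathcal{G}_{w,*})$ has all row weights at most $N/2$, so $\mathcal{T}$ satisfies the dilution type 2 model of section~\ref{subsub:sumDilution} and the Chernoff estimate in Eq.~\eqref{HerculeanChernoff} applies row-by-row; moreover $\mathcal{T}$ is nonrandom because both $\mathcal{A}^\star$ and the $(d-1)$-disjunct matrix $\mathcal{G}$ from Lemma~\ref{lem:ConstDisjunct} are. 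The counting $t = h \times K = h \times c \times k$ with $h < d^2\log_2^2 N$, $k = 2\log_2 N$, and the stated $c$ then gives the claimed bound on $t$ directly.

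The combinatorial heart of the argument is to show that whenever all $h \cdot k$ noiseless row-outcomes are recovered correctly, Algorithm~\ref{alg:decodingDDefect} returns exactly the defective set. Fix a defective item $j \in \mathbb{G}$. Because $\mathcal{G}$ is $(d-1)$-disjunct and $|\mathbb{G}| \leq d$, there is a row $w$ of $\mathcal{G}$ with $g_{wj} = 1$ and $g_{wj'} = 0$ for every other $j' \in \mathbb{G}$. For this $w$, the masked matrix $\mathcal{B}^w = \mathcal{M}\times\mathrm{diag}(\mathcal{G}_{w,*})$ satisfies $\mathbf{y}^{\prime w} = \mathcal{B}^w \otimes \mathbf{x} = \bigvee_{j'' : x_{j''}g_{wj''}=1}\mathcal{M}_{j''} = \mathcal{M}_j$, a single column of $\mathcal{M}$, by Eq.~\eqref{extSaffron}. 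Consequently $\mathrm{wt}(\mathbf{y}^{\prime w}) = \log_2 N$ and the SAFFRON decoding inside $\mathrm{Dec1Defect}$ identifies $j$. Conversely, condition (ii) of section~\ref{subsub:dec1Overview} guarantees that any row $w$ producing a union of two or more columns (or the zero vector) never yields a spurious index, so there are no false positives. Hence the union of the outputs over all $w$ is precisely $\mathbb{G}$.

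It remains to bound the probability that some noiseless row-outcome is mis-recovered. For each pair $(w,i)$ with $w \in \{1,\dots,h\}$ and $i \in \{1,\dots,k\}$, the majority vote over the $c$ trials in $\mathbf{y}_i^w$ returns the wrong value of $\mathcal{B}^w_{i,*}\otimes\mathbf{x}$ with probability at most $\exp(-\lambda^2\mu_0/2)$ by Eq.~\eqref{HerculeanChernoff}. Substituting $c = \frac{2\ln(2d^2\log_2^3 N/\delta)}{p_0\lambda^2}$ gives $\mu_0 = p_0 c$, so this probability is at most $\frac{\delta}{2d^2\log_2^3 N}$, mirroring Eq.~\eqref{cmu0}. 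Since there are $hk < 2d^2\log_2^3 N$ such pairs, the union bound yields a total failure probability below $\delta$; thus all row-outcomes are correct, and therefore all of $\mathbb{G}$ is recovered, with probability at least $1-\delta$. Finally, running $\mathrm{Dec1Defect}$ once per block costs $O(K)$ and there are $h$ blocks, so the decoding time is $O(hK) = O(t)$.

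The step I expect to be the main obstacle is the clean passage from $(d-1)$-disjunctness to the existence of an isolating row for every defective item: one must argue the isolation holds for each $j\in\mathbb{G}$ simultaneously using the single fixed matrix $\mathcal{G}$, and verify that items $j''$ with $x_{j''}=1$ but $g_{wj''}=0$ are genuinely suppressed by the diagonal masking, so that $\mathbf{y}^{\prime w}$ collapses to the single column $\mathcal{M}_j$. Everything else reduces to the already-established single-defective machinery together with bookkeeping on $h$ and $k$.
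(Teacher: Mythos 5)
Your proposal is correct and follows essentially the same route as the paper's own proof: the same Chernoff-plus-union-bound control of the $hk$ majority votes (you merge the paper's two-stage union bound over $k$ rows and then $h$ blocks into a single bound over $hk$ pairs, which is equivalent), and the same use of $(d-1)$-disjunctness of $\mathcal{G}$ to isolate each defective item so that $\mathbf{y}^{\prime w}$ collapses to a single column $\mathcal{M}_j$, with decoding time $h\times O(ck)=O(t)$. Your explicit remark that condition (ii) rules out spurious indices is a small completeness bonus over the paper's write-up, but it does not change the argument.
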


\begin{proof}
From the construction of matrix $\mathcal{T}$ in the section~\ref{subsub:encD}, it is nonrandomly constructed. The number of tests is $t = O \left( \frac{d^2\log_2^3{N} \times \ln(d^2\log_2^3{N}/\delta)}{\left( 1 - \frac{1}{2}(\theta_0 + \theta_1) \right) \lambda^2} \right)$ and the weight of each row in $\mathcal{T}$ is up to $N/2$. That means the number of items in each test in $\mathcal{T}$ does not exceed $N/2$. In addition, $\theta_0 + \theta_1 \leq 2 \left( 1 - \frac{1/2 + \xi}{1 - \lambda} \right)$. Then, noise in the test outcome depends on the number of items in that test under dilution type 2 model in the section~\ref{subsub:sumDilution}.

Step 3 in Algorithm~\ref{alg:decodingDDefect} is interpreted here. Because $\mathbf{y}_*^w$ is defined as Eq.~\eqref{yi}, the function $\mathrm{Dec1Defect}(\mathbf{y}_*^w, N, c)$ first recovers $\mathbf{y}^\prime_w$ from $\mathbf{y}_*^w$ (Steps 4--11 in Algorithm~\ref{alg:decoding1Defect}), where
\begin{equation}
\mathbf{y}^\prime_w = \mathcal{B}^w \otimes \mathbf{x} = \begin{bmatrix}
\mathcal{B}^w_{1, *} \otimes \mathbf{x} \\
\vdots \\
\mathcal{B}^w_{k, *} \otimes \mathbf{x}
\end{bmatrix} = \bigvee_{\substack{j = 1 \\ x_j g_{wj} = 1}}^N \mathcal{M}_j.
\label{ExtremelyComplicated}
\end{equation}
We note that $\mathcal{B}^w$ is the matrix $\mathcal{B}$ in Eq.~\eqref{matrixB} by setting $\mathbf{g} = \mathcal{G}_{w, *}$. Matrix $\mathcal{B}^w$ is used to generate the matrix $\mathcal{A}^w$ in Eq.~\eqref{Ai}. Then $\mathbf{y}^\prime_w$ is decoded as described in the section~\ref{subsub:extSAFFRON}. The decoding complexity of step 3 in Algorithm~\ref{alg:decodingDDefect} is $O(ck)$.

We are going to estimate the probability that Algorithm~\ref{alg:decoding1Defect} fails to recover $\mathbf{y}^\prime_w$ from $\mathbf{y}_*^w$. Since $c = \frac{2 \ln(2d^2 \log_2^3{N}/\delta)}{p_0 \lambda^2}$, we have $\mu_0 = p_0 c = \frac{2 \ln(2d^2 \log_2^3{N}/\delta)}{\lambda^2} $. Then, from Eq.~\eqref{HerculeanChernoff}, the probability that $\mathcal{B}^w_{i, *} \otimes \mathbf{x}$ is not recovered from $c$ trials of test $\mathcal{B}^w_{i, *} \tilde{\otimes} \mathbf{x}$ in $\mathbf{y}_*^w$ is at most:
\begin{eqnarray}
\mathrm{exp} \left(- \frac{\lambda^2 \mu_0}{2} \right) \leq \mathrm{exp} \left( - \frac{\lambda^2}{2} \cdot \frac{2 \ln(2d^2 \log_2^3{N}/\delta)}{\lambda^2} \right) = \frac{\delta}{2d^2 \log_2^3{N}} \notag
\end{eqnarray}
\noindent
where $i = 1, \ldots, k$,

Since there is $k = 2\log_2{N}$, the probability that Algorithm~\ref{alg:decoding1Defect} fails to recover $\mathbf{y}^\prime_w$ from $\mathbf{y}_*^w$ is at most:
\begin{eqnarray}
k \times \frac{\delta}{2d^2 \log_2^3{N}} = 2\log_2{N} \times \frac{\delta}{2d^2 \log_2^3{N}} = \frac{\delta}{d^2 \log_2^2{N}}
\end{eqnarray}

From Algorithm~\ref{alg:decodingDDefect}, since $w = 1, \ldots, h$ and $h < d^2 \log_2^2{N}$, using union bound, the total probability that the algorithm fails to decode any $\mathbf{y}_*^1, \ldots, \mathbf{y}_*^k$ is at most:
\begin{equation}
h \times \frac{\delta}{d^2 \log_2^2{N}} < d^2 \log_2^2{N} \times \frac{\delta}{d^2 \log_2^2{N}} = \delta
\end{equation}

Let $\mathbb{G} = \{j_1, \ldots, j_{|\mathbb{G}|} \}$ be the defective set, where $|\mathbb{G}| \leq d$. Our task now is to prove that there exists $\mathbf{y}^\prime_{i_1}, \ldots, \mathbf{y}^\prime_{i_{|\mathbb{G}|}}$ such that $j_1 = \mathrm{Dec1Defect}(\mathbf{y}^\prime_{i_1}, N, c), \ldots$, $j_1 = \mathrm{Dec1Defect}(\mathbf{y}^\prime_{i_{|\mathbb{G}|}}, N, c)$. Since the role of each element of $\mathbb{G}$ is equivalent, we only need proving that there exists row $i_1$ of $\mathcal{G}$ such that $j_1 = \mathrm{Dec1Defect}(\mathbf{y}^\prime_{i_1}, N, c)$. The rest of the element in $\mathbb{G}$ can be identified in the same way.

Indeed, because of the property 2 of $\mathcal{G}$ in the section~\ref{subsub:encD}, for column $j_1$, there exists row $i_1$ such that $g_{i_1j_1} = 1$ and $g_{i_1 j_a} = 0$, where $a \in \{2, 3, \ldots, |\mathbb{G}| \}$. Then Eq.~\eqref{ExtremelyComplicated} becomes:
\begin{eqnarray}
\mathbf{y}_{i_1}^\prime = \bigvee_{\substack{j = 1 \\ x_j g_{i_1 j} = 1}}^N \mathcal{M}_j = \bigvee_{\substack{j \in \mathbb{G} \\ x_j g_{i_1 j} = 1}}^N \mathcal{M}_j = \mathcal{M}_{j_1}.
\end{eqnarray}
Therefore, $j_1 = \mathrm{Dec1Defect}(\mathbf{y}^\prime_{i_1}, N, c)$. Thus, there exists $\mathbf{y}^\prime_{i_1}, \ldots, \mathbf{y}^\prime_{i_{|\mathbb{G}|}}$ such that $j_1 = \mathrm{Dec1Defect}(\mathbf{y}^\prime_{i_1}, N, c), \ldots, j_{|\mathbb{G}|} = \mathrm{Dec1Defect}(\mathbf{y}^\prime_{i_{|\mathbb{G}|}}, N, c)$. Consequently, Algorithm~\ref{alg:decodingDDefect} can identify the defective items with probability at least $1 - \delta$.

For decoding complexity, because Algorithm~\ref{alg:decodingDDefect} scans the test outcome $\mathbf{y}$ once and the decoding complexity of the step 3 is $O(ck)$, the decoding complexity is $h \times O(ck) = O(t)$. $\blacksquare$
\end{proof}

\section{Evaluations}
\label{sec:eval}

\subsection{Overview}
\label{sub:overEval}
We summarized our scheme compared with those in~\cite{atia2012boolean,lee2016saffron}. The results are shown in Table~\ref{tbl:comparison}, where $\times$ means that it is not considered, ``Indept.'' and ``Dept.'' stand for ``Independent'' and ``Dependent''. The notations in Table~\ref{tbl:comparison} are described in sections~\ref{sub:model} and~\ref{sub:new}.

\begin{center}
\begin{table*}[ht]
\centering
\caption{Comparison with existing work.}
\scalebox{0.75}{
\begin{tabular}{|c|c|c|c|c|c|c|}
\hline
Model & \begin{tabular}{@{}c@{}}  Noisy \\ type  \end{tabular} & \begin{tabular}{@{}c@{}} Defective \\ items $(d)$ \end{tabular} & \begin{tabular}{@{}c@{}} Tests \\ $t$ \end{tabular} & \begin{tabular}{@{}c@{}} Decoding \\ complexity \end{tabular} & \begin{tabular}{@{}c@{}} Decoding \\ type \end{tabular} & \begin{tabular}{@{}c@{}} Construction \\ type \end{tabular} \\
\hline
\begin{tabular}{@{}c@{}} Additive noise \\ type 1~\cite{atia2012boolean} \end{tabular} & Indept. & $\geq 1$ & $O\left( \frac{d^2 \log{N}}{1 - \epsilon_1} \right)$ & $\times$ & $\times$ & $\times$ \\
\hline
\begin{tabular}{@{}c@{}} Additive noise \\ type 2~\cite{lee2016saffron} \end{tabular} & Indept. & $\geq 1$ & $O \left( \frac{6 c(\epsilon_2) d \log_2{N}}{1 - H(\epsilon_2) - \delta} \right)$ & $O(t)$ & Random & Random\\
\hline
\begin{tabular}{@{}c@{}} Dilution \\ type 1~\cite{atia2012boolean} \end{tabular} & Indept. & $\geq 1$ & $O \left( \frac{d^2 \log_2{N}}{(1-\epsilon_3)^2} \right)$ & $\times$ & $\times$ & $\times$\\
\hline
\begin{tabular}{@{}c@{}} \textbf{Dilution type 2} \\ \textbf{(Proposed)} \end{tabular} & \textbf{Dept.} & \begin{tabular}{@{}c@{}} 1 \vspace{2mm} \\ $\geq 2$ \end{tabular} & \begin{tabular}{@{}c@{}} $O \left( \frac{\log_2{N} \times \log_2(\log_2{N}/\delta)}{\left( 1 - \frac{1}{2}(\theta_0 + \theta_1) \right) \lambda^2} \right)$ \\ $O \left( \frac{d^2\log_2^3{N} \times \log_2(d^2\log_2^3{N}/\delta)}{\left( 1 - \frac{1}{2}(\theta_0 + \theta_1) \right) \lambda^2} \right)$ \end{tabular}  & $O(t)$ & \textbf{Random} & \textbf{Nonrandom} \\
\hline
\end{tabular}}

\label{tbl:comparison}
\end{table*}

\end{center}

Since the noise type in each model is different in the sections~\ref{sub:model} and~\ref{sub:new}, it is \textbf{incompatible} to evaluate which model is better. Therefore, we demonstrate our simulations from the view point of the number of tests needed, accuracy, and decoding time. All the decoding times are the averages of a hundred runs.

Our proposed schemes are outperformed to the existing schemes in term of construction of measurement matrices. Because our proposed construction is nonrandom, there is \textit{no need in storing} the measurement matrices. On the contrary, the rest of the existing schemes is random, in which each entry is generated randomly and measurement matrices needed to be stored before implementing tests.

\subsection{Parameter settings}
\label{sub:paraSetup}

Let $\lambda = \frac{1}{3}$ and $\xi = 0.001$. $\theta_0$ and $\theta_1$ are chosen such that Eq.~\eqref{conditionTheta} holds, i.e., $\theta_0 + \theta_1 \leq 0.4970$. In particular, $\theta_0$ is up to 0.2 and $\theta_1$ is up to 0.1. Therefore, there are five parameters left to be considered here. These are the number of items $N$, the maximum number of defective items $d$, the precision of decoding algorithms $\delta$, the probability of false positive $\theta_0$, and the probability of false negative $\theta_1$. We fix the precision $\delta = 0.001$ to reduce the number of experiments.

First, $N = \{2^{21}, 2^{28}, 2^{30}, 2^{32}, 2^{33} \}$ is the collection of the number of items used here. When $d \geq 2$, the number of defective items used are $3, 6,$ and $16$. $\theta_0$ and $\theta_1$ are set up in each simulation.

The $h \times N$ $(d-1)$-disjunct matrix $\mathcal{G}$ is generated in Lemma~\ref{lem:ConstDisjunct}. As we mentioned in the proof in Appendix~\ref{app:Thr1}, $\mathcal{G}$ is generated from a $[n, r, n - r + 1]_q$ MDS code $C$ as a Reed-Solomon code~\cite{reed1960polynomial} and an $q \times q$ identity matrix. Note that $N = q^r$, $h = qn$, and $d - 1 = \lfloor \frac{n-1}{r-1} \rfloor$. Table~\ref{tbl:setting} shows 5 instances for Reed-Solomon code, which are used in our simulations. For each $n$, there are a corresponding $d-1$ and $h$. For example, in case $\langle 1 \rangle$, if $n = 5$ then $d-1 = 2$ and $h = qn = 640$, $n = 11$ then $d-1 = 5$ and $h = qn = 1,408$. We note that for a fix value $d-1$, there are many classes of Reed-Solomon code satisfying condition $d -1 = \lfloor \frac{n-1}{r-1} \rfloor$ as in Table~\ref{tbl:setting}.
\begin{table}[ht]
\caption{Settings for $h \times N$ $(d-1)$-disjunct matrix $\mathcal{G}$.}
\begin{center}
\scalebox{0.95}{
\begin{tabular}{|c|c|c|c|c|c|c|}
\hline
Case & $q$ & $n$ & $r$ & $d-1$ & $h = qn$ & $N = q^r$ \\
\hline
$\langle 1 \rangle$ & 128 & $\{5; 11; 31 \}$ & 3 & $\{2; 5; 15 \}$ & $\{640; 1,408; 3,968\}$ & \begin{tabular}{@{}c@{}} $128^3 = 2^{21}$ \\ $= 2,097,152$ \end{tabular} \\
\hline
$\langle 2 \rangle$ & 128 & $\{7; 16; 46 \}$ & 4 & $\{2; 5; 15 \}$ & $\{896; 2,048; 5,888 \}$ & \begin{tabular}{@{}c@{}} $128^4 = 2^{28}$ \\ $= 268,435,456$ \end{tabular} \\
\hline
$\langle 3 \rangle$ & 64 & $\{11; 21; 61 \}$ & 5 & $\{2; 5; 15 \}$ & $\{704; 1,344; 3,904 \}$ & \begin{tabular}{@{}c@{}} $64^5 = 2^{30}$ \\ $= 1,073,741,824$ \end{tabular} \\
\hline
$\langle 4 \rangle$ & 256 & $\{7; 16; 46 \}$ & 4 & $\{2; 5; 15 \}$ & $\{1,792; 4,096; 11,776 \}$ & \begin{tabular}{@{}c@{}} $256^4 = 2^{32}$ \\ $= 4,294,967,296$ \end{tabular} \\
\hline
$\langle 5 \rangle$ & 2048 & $\{5; 11; 31 \}$ & 3 & $\{2; 5; 15 \}$ & $\{10,240; 22,528; 63,488 \}$ & \begin{tabular}{@{}c@{}} $2048^3 = 2^{33}$ \\ $= 8,589,934,592$ \end{tabular} \\
\hline
\end{tabular}}

\label{tbl:setting}
\end{center}
\end{table}

We run simulators for our proposed schemes in Matlab R2015a and test them on a HP Compaq Pro 8300SF with 3.4 GHz Intel Core i7-3770 and 16 GB memory.

\subsection{Number of tests and accuracy}
\label{sub:tests}

When $d=1$, the number of tests in our simulations is $t = 2\log_2{N} \times \lceil \frac{18 \ln(2 \log_2{N}/\delta)}{p_0} \rceil$. $(\theta_0, \theta_1)$ is set to be $(0.002, 0.001), (0.02, 0.01), (0.05, 0.02)$, and $(0.2, 0.1)$. Fig.~\ref{fig:test1} shows that the numbers of tests in these cases do not exceed $16,000$, even when $N = 2^{33} \approx 9$ billion. The graph has the shape of the \textit{logarithmic} function, which is likely linear to $\log_2{N}$. It is also shown that the larger the sum of $\theta_0$ and $\theta_1$, the larger the number of tests.

\begin{center}
\begin{figure}[ht]
\centering
  \includegraphics[scale=0.25]{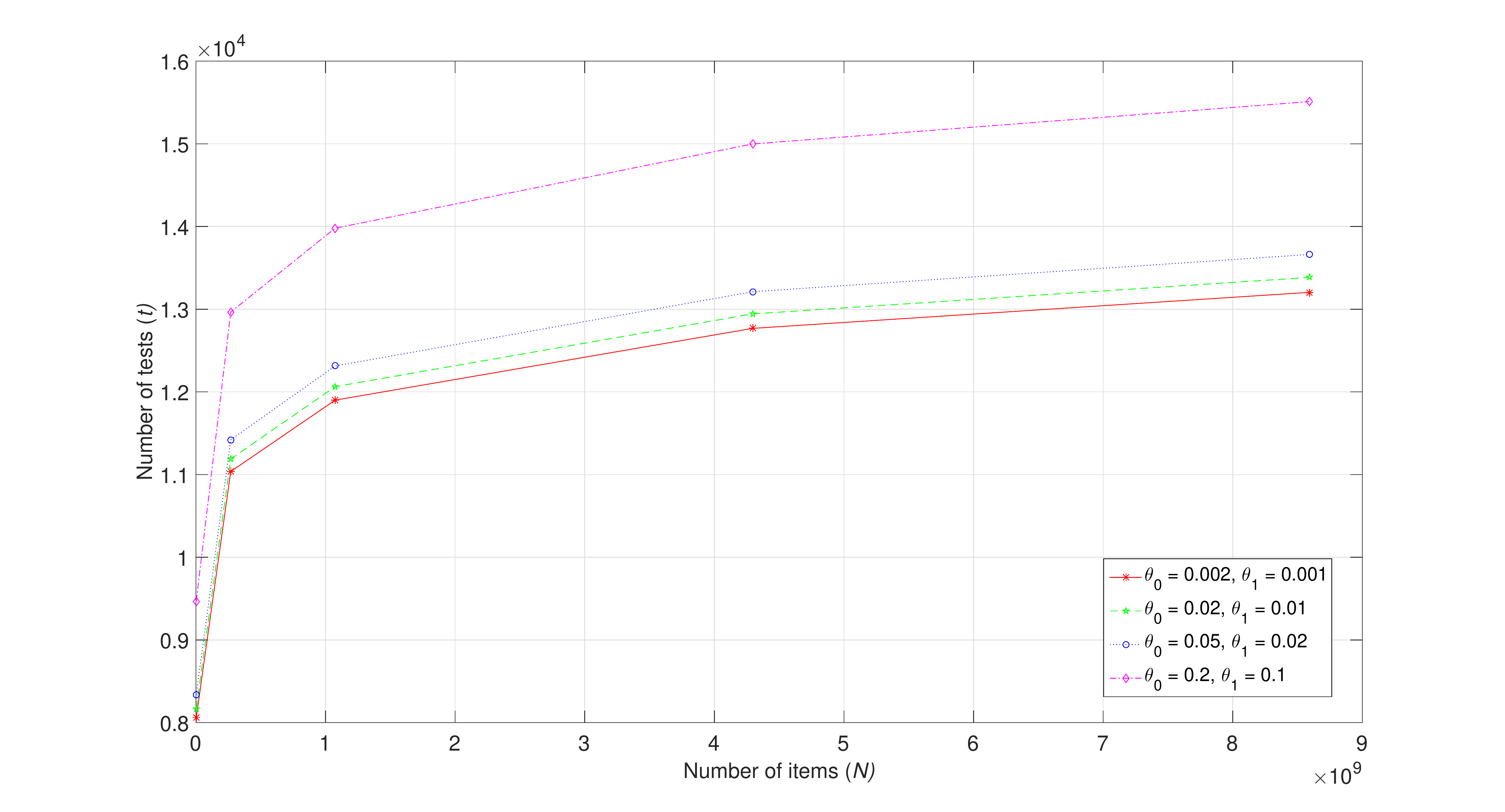}    
  \caption{Number of tests for $d = 1$ when varying $N$, $\theta_0$, and $\theta_1$.}
  \label{fig:test1} 
\end{figure}
\end{center}

When $d \geq 2$, the number of tests in our simulations is $t = 2\log_2{N} \times qn \times \lceil \frac{18 \ln(2d^2 \log_2^3{N}/\delta)}{p_0} \rceil$. We set $(\theta_0, \theta_1) = \{ (0.002, 0.001), (0.2, 0.1)\}$ and $d = \{ 3, 6, 16 \}$. Fig.~\ref{fig:testD} shows that the numbers of tests in these cases are scaled to $N$, $d$, $\theta_0$, and $\theta_1$ in the shape of a \textit{quadratic} function. It is likely to fit to our formula. Moreover, the number of tests is still small (at most $2.5$ billion tests) comparing to the number of items, which is at most $N = 2^{33} \approx 9$ billion.

\begin{figure}[ht]
  \centering
  
  \includegraphics[scale=0.25]{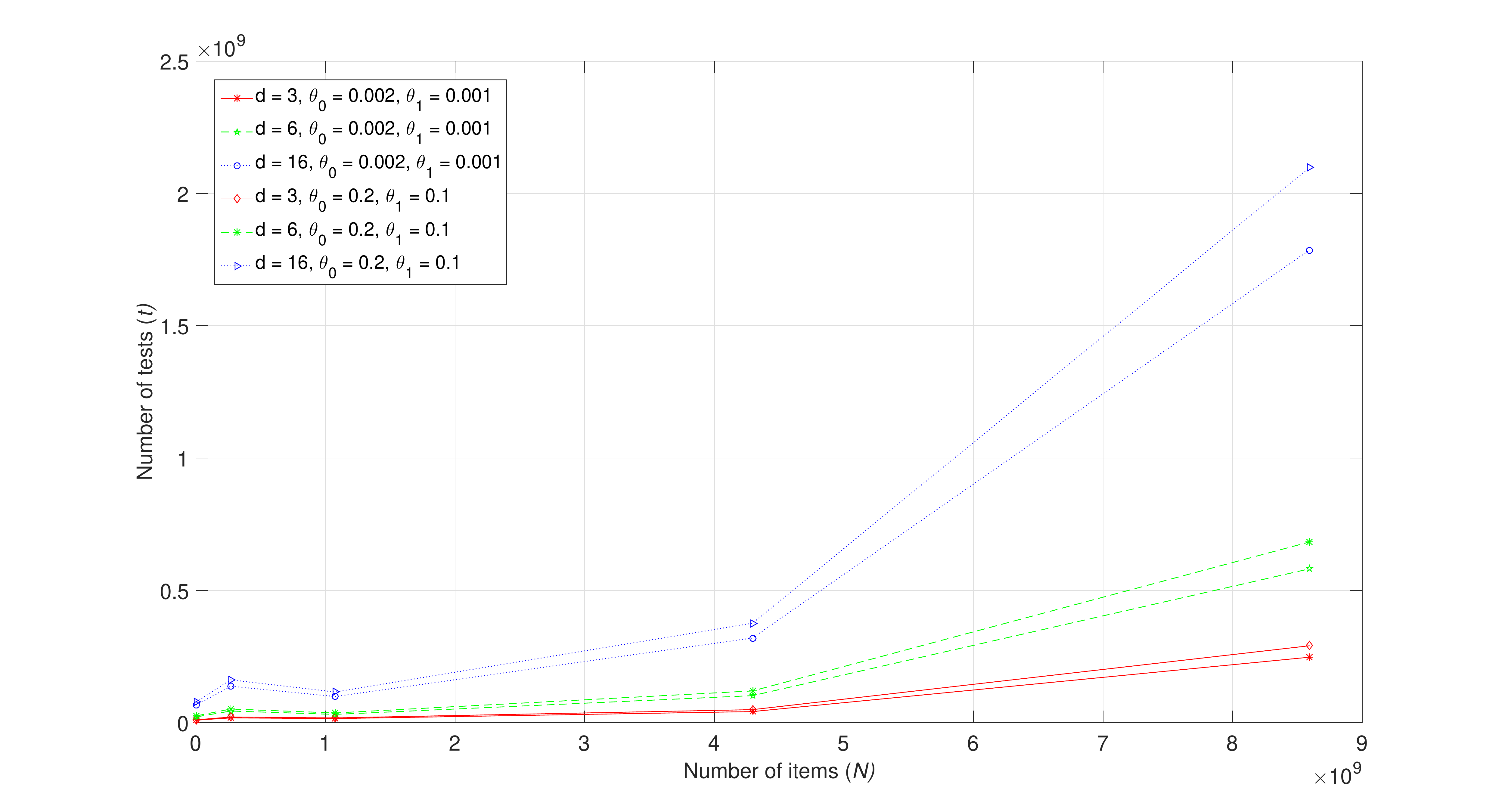}
    
  \caption{Number of tests for $d = 3, 6, 16$ when varying $N$, $d$, $\theta_0$, and $\theta_1$.}
  \label{fig:testD}
\end{figure}

It is noted that all defective items $(d \leq 16)$ are identified with probability 1 for all simulations.

\subsection{Decoding time}
\label{sub:decTime}
We also set $(\theta_0, \theta_1) = \{ (0.002, 0.001), (0.2, 0.1) \}$ in this simulation. When $d = 1$, the running time is always less than 6 microsecond because the number of tests is smaller than 16,000. Therefore, we only show the decoding time when $d \geq 2$, specifically, $d = \{3, 6, 16 \}$. Experimental results in Fig.~\ref{fig:decD} show that the decoding time does not exceed 7 seconds even when $N = 2^{33} \approx 9$ billion. Moreover, the decoding time is linear to the number of tests.

Since the number of tests is proportional to the sum $\theta_0 + \theta_1$, it increases as $\theta_0 + \theta_1$ increases. Then the decoding time should increase as $\theta_0 + \theta_1$ increases because it is linear to the number of tests. However, Fig.~\ref{fig:decD} somehow shows that the decoding time is \textit{increasing} when $\theta_0 + \theta_1$ slightly \textit{decreases}. We can explain this phenomenon by looking at the construction of $\mathcal{T}$. There likely are more operations executed in the step 3 in Algorithm~\ref{alg:decodingDDefect} when $p_0$ decreases. Since the step 3 in Algorithm~\ref{alg:decodingDDefect} calls Algorithm~\ref{alg:decoding1Defect}, we start analyzing Algorithm~\ref{alg:decoding1Defect} here. Because of the construction of $\mathcal{G}$, the number of rows that satisfies the condition 2 in the section~\ref{subsub:encD} likely decreases when $p_0$ increases. Consequently, there are more operations executed at the step 13 in Algorithm~\ref{alg:decoding1Defect}. Therefore, it is not a disadvantage that the number of tests sometimes increases. However, in general, there is a huge difference in decoding time when $p_0$ is close to zero comparing to the case where $p_0$ is close 1.

\begin{figure}[ht] 
  \centering
  
  \includegraphics[scale=0.25]{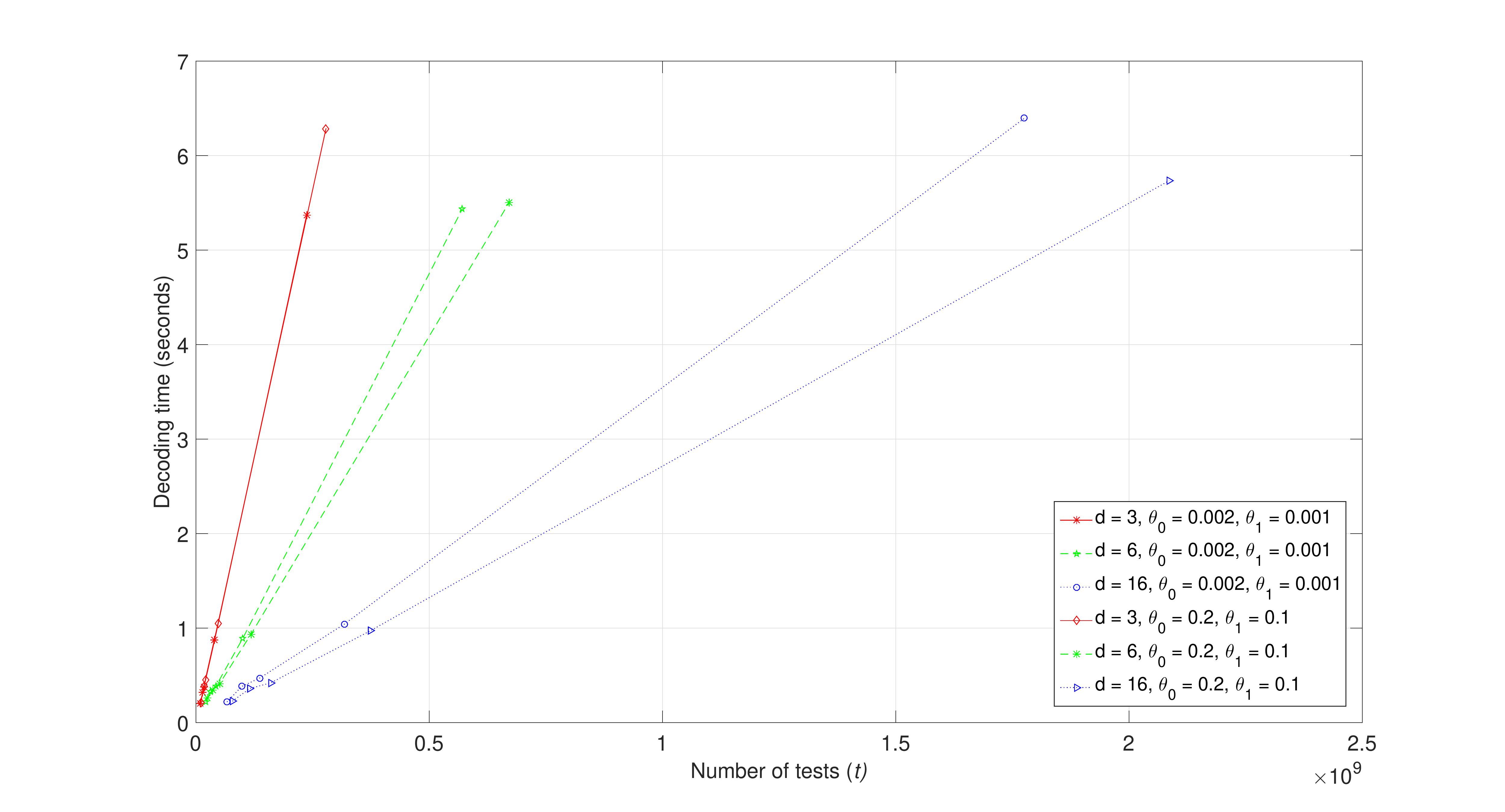}
    
  \caption{Decoding time when varying $N$, $d$, $\theta_0$, and $\theta_1$.}
  \label{fig:decD}
\end{figure}

\section{Conclusion}
\label{sec:cls}
We have proposed efficient schemes to identify defective items in noisy NAGT in which the outcome for a test depends on the number of items it contains. The number of tests in this model is not scaled to $O(d^2 \log{N})$ as usual, but is slightly higher. However, the decoding complexity of our proposed schemes is linear to the number of tests. The open problem is whether exists a scheme that fits to the dilution type 2 model by using $O(d^2 \log{N})$ tests to identify at most $d$ defective items in time $O(d^2 \log{N})$.

\begin{acknowledgements}
The authors thank Teddy Furon for his valuable discussions, Matthias Qusenbauer and Nguyen Son Hoang Quoc for their helpful comments.
\end{acknowledgements}


\bibliographystyle{spmpsci}
\bibliography{bibli}

\begin{thebibliography}{10}
\providecommand{\url}[1]{{#1}}
\providecommand{\urlprefix}{URL }
\expandafter\ifx\csname urlstyle\endcsname\relax
  \providecommand{\doi}[1]{DOI~\discretionary{}{}{}#1}\else
  \providecommand{\doi}{DOI~\discretionary{}{}{}\begingroup
  \urlstyle{rm}\Url}\fi

\bibitem{atia2012boolean}
Atia, G.K., Saligrama, V.: Boolean compressed sensing and noisy group testing.
\newblock IEEE Transactions on Information Theory \textbf{58}(3), 1880--1901
  (2012)

\bibitem{bruno1995efficient}
Bruno, W.J., Knill, E., Balding, D.J., Bruce, D., Doggett, N., Sawhill, W.,
  Stallings, R., Whittaker, C.C., Torney, D.C.: Efficient pooling designs for
  library screening.
\newblock Genomics \textbf{26}(1), 21--30 (1995)

\bibitem{bui2016efficiently}
Bui, T.V., Kojima, T., Echizen, I.: Efficiently decodable defective items
  detected by new model of noisy group testing.
\newblock In: Information Theory and Its Applications (ISITA), 2016
  International Symposium on, pp. 265--269. IEEE (2016)

\bibitem{cai2013grotesque}
Cai, S., Jahangoshahi, M., Bakshi, M., Jaggi, S.: Grotesque: noisy group
  testing (quick and efficient).
\newblock In: Communication, Control, and Computing (Allerton), 2013 51st
  Annual Allerton Conference on, pp. 1234--1241. IEEE (2013)

\bibitem{chen2008survey}
Chen, H.B., Hwang, F.K.: A survey on nonadaptive group testing algorithms
  through the angle of decoding.
\newblock Journal of Combinatorial Optimization \textbf{15}(1), 49--59 (2008)

\bibitem{cheraghchi2013noise}
Cheraghchi, M.: Noise-resilient group testing: Limitations and constructions.
\newblock Discrete Applied Mathematics \textbf{161}(1-2), 81--95 (2013)

\bibitem{cormode2005s}
Cormode, G., Muthukrishnan, S.: What's hot and what's not: tracking most
  frequent items dynamically.
\newblock ACM Transactions on Database Systems (TODS) \textbf{30}(1), 249--278
  (2005)

\bibitem{dorfman1943detection}
Dorfman, R.: The detection of defective members of large populations.
\newblock The Annals of Mathematical Statistics \textbf{14}(4), 436--440 (1943)

\bibitem{du2000combinatorial}
Du, D., Hwang, F.K., Hwang, F.: Combinatorial group testing and its
  applications, vol.~12.
\newblock World Scientific (2000)

\bibitem{erlich2015biological}
Erlich, Y., Gilbert, A., Ngo, H., Rudra, A., Thierry-Mieg, N., Wootters, M.,
  Zielinski, D., Zuk, O.: Biological screens from linear codes: theory and
  tools.
\newblock bioRxiv p. 035352 (2015)

\bibitem{griffin2000diabetes}
Griffin, S., Little, P., Hales, C., Kinmonth, A., Wareham, N.: Diabetes risk
  score: towards earlier detection of type 2 diabetes in general practice.
\newblock Diabetes/metabolism research and reviews \textbf{16}(3), 164--171
  (2000)

\bibitem{guruswamiessential}
Guruswami, V., Rudra, A., Sudan, M.: Essential coding theory, 2014.
\newblock Draft available at http://www. cse. buffalo.
  edu/atri/courses/coding-theory/book/index.html

\bibitem{indyk2010efficiently}
Indyk, P., Ngo, H.Q., Rudra, A.: Efficiently decodable non-adaptive group
  testing.
\newblock In: Proceedings of the twenty-first annual ACM-SIAM symposium on
  Discrete Algorithms, pp. 1126--1142. SIAM (2010)

\bibitem{kainkaryam2010poolmc}
Kainkaryam, R.M., Bruex, A., Gilbert, A.C., Schiefelbein, J., Woolf, P.J.:
  poolmc: Smart pooling of mrna samples in microarray experiments.
\newblock BMC bioinformatics \textbf{11}(1), 299 (2010)

\bibitem{kautz1964nonrandom}
Kautz, W., Singleton, R.: Nonrandom binary superimposed codes.
\newblock IEEE Transactions on Information Theory \textbf{10}(4), 363--377
  (1964)

\bibitem{laarhoven2015asymptotics}
Laarhoven, T.: Asymptotics of fingerprinting and group testing: Tight bounds
  from channel capacities.
\newblock IEEE Transactions on Information Forensics and Security
  \textbf{10}(9), 1967--1980 (2015)

\bibitem{lee2016saffron}
Lee, K., Pedarsani, R., Ramchandran, K.: Saffron: A fast, efficient, and robust
  framework for group testing based on sparse-graph codes.
\newblock In: Information Theory (ISIT), 2016 IEEE International Symposium on,
  pp. 2873--2877. IEEE (2016)

\bibitem{lewis2004measurement}
Lewis, S., Osei-Bimpong, A., Bradshaw, A.: Measurement of haemoglobin as a
  screening test in general practice.
\newblock Journal of Medical screening \textbf{11}(2), 103--105 (2004)

\bibitem{ngo2011efficiently}
Ngo, H.Q., Porat, E., Rudra, A.: Efficiently decodable error-correcting list
  disjunct matrices and applications.
\newblock In: International Colloquium on Automata, Languages, and Programming,
  pp. 557--568. Springer (2011)

\bibitem{reed1960polynomial}
Reed, I.S., Solomon, G.: Polynomial codes over certain finite fields.
\newblock Journal of the society for industrial and applied mathematics
  \textbf{8}(2), 300--304 (1960)

\bibitem{silverman1960metrization}
Silverman, R.: A metrization for power-sets with applications to combinatorial
  analysis.
\newblock Canad. J. Math \textbf{12}, 158--176 (1960)

\bibitem{wolf1985born}
Wolf, J.: Born again group testing: Multiaccess communications.
\newblock IEEE Transactions on Information Theory \textbf{31}(2), 185--191
  (1985)

\end{thebibliography}

\appendix
\section{Proof of Lemma \ref{lem:ConstDisjunct}}
\label{app:Thr1}

There are two concepts needed to be introduced before going to prove this lemma. The first one is maximum distance separable code and the second one is concatenation technique. We first brieftly introduce the notion of maximum distance separable (MDS) code. We recommend the reader reading the text book~\cite{guruswamiessential} for further reading. Consider an $[n, r, \Delta]_q$-code $C$ where $n, r$, $\Delta$, and $q$ are block length, dimension, minimum distance, alphabet size, respectively, which satisfy $1 \leq r \leq n < q$. All arithmetic for $C$ is done in the Galois field $\mathbb{F}_q$. Code $C$ is a subset of $[q]^n$ and $|C|= q^r$, where $[q] = \{0, 1, \ldots, q-1 \}$. Each element in $C$ is called a codeword, which is a vector of size $n \times 1$. Each entry of a codeword belongs to $\mathbb{F}_q$. $\Delta$ is the minimum distance, that is the minimum number of positions in which any two codewords of $C$ differ. $C$ is called an MDS code if $\Delta = n - r + 1$. The following lemma is useful to enumerate how many codewords in $C$ if we fix a coordinate position.

\begin{lemma}[Section 3~\cite{silverman1960metrization}]
\label{lem:countSimilarity}

Let $C$ be an $[n, r, n - r + 1]_q$-MDS code over the alphabet $\mathbb{F}_q$. Fix a position $i_0$ and let $\alpha$ be an element of $\mathbb{F}_q$. Then there are exactly $q^{r-1}$ codewords with $\alpha$ in the position $i_0$.
\end{lemma}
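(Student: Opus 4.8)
The plan is to exploit the defining minimum-distance property of an MDS code to show that any $r$ coordinate positions form an \emph{information set}; that is, the projection of $C$ onto those $r$ positions is a bijection onto $\mathbb{F}_q^r$. Once this is established, the count is immediate, so the whole proof reduces to a single combinatorial observation about how many coordinates distinct codewords can share.

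First I would fix a set $S$ of exactly $r$ coordinate positions that contains the designated position $i_0$; such a set exists because $1 \leq r \leq n$. Let $\pi_S \colon C \to \mathbb{F}_q^S \cong \mathbb{F}_q^r$ be the map that restricts each codeword to the coordinates in $S$. The key step is to prove that $\pi_S$ is injective. Suppose two codewords $\mathbf{c}, \mathbf{c}' \in C$ satisfy $\pi_S(\mathbf{c}) = \pi_S(\mathbf{c}')$; then they agree on all $|S| = r$ positions of $S$, hence they can differ in at most $n - r$ positions. If $\mathbf{c} \neq \mathbf{c}'$, then by the MDS property $\Delta = n - r + 1$ they must differ in at least $n - r + 1$ positions, a contradiction. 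Therefore $\mathbf{c} = \mathbf{c}'$, and $\pi_S$ is injective.

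Next, since $|C| = q^r = |\mathbb{F}_q^r|$ and $\pi_S$ is an injection between finite sets of equal cardinality, $\pi_S$ is in fact a bijection. Consequently, counting codewords of $C$ with a prescribed value $\alpha$ at position $i_0$ reduces to counting vectors of $\mathbb{F}_q^S$ with that prescribed value: the number of $\mathbf{v} \in \mathbb{F}_q^S$ with $v_{i_0} = \alpha$ is exactly $q^{r-1}$, since the remaining $r-1$ coordinates of $S$ range freely over $\mathbb{F}_q$. Pulling this back through the bijection $\pi_S$ yields exactly $q^{r-1}$ codewords of $C$ with $\alpha$ in position $i_0$, as claimed.

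I do not expect a serious obstacle. The only subtlety worth flagging is that the statement is phrased for a general MDS code, which need not be linear, so I would deliberately avoid any argument routed through a generator matrix and instead rely on the purely combinatorial minimum-distance argument above, which applies verbatim to nonlinear MDS codes. The crux is therefore just the observation that two distinct codewords cannot agree on $r$ or more positions, and everything else is bookkeeping with cardinalities.
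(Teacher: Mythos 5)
Your proof is correct. Note, however, that the paper does not actually prove this lemma at all: it is imported wholesale by citation (to Silverman's result) and then used in the appendix to conclude that every row of $\mathcal{G}$ has Hamming weight $q^{r-1}$. So there is no in-paper argument to compare against; what you have written is the standard self-contained justification. Your route --- showing that the projection onto any $r$ coordinates containing $i_0$ is injective (two distinct codewords can agree on at most $n - \Delta = r-1$ positions), hence a bijection onto $\mathbb{F}_q^r$ by the cardinality count $|C| = q^r$, and then counting vectors with a prescribed entry --- is exactly the classical ``any $r$ positions form an information set'' argument. Your remark about avoiding generator matrices is well taken: the combinatorial argument applies to nonlinear MDS codes, which matches the generality of the statement as phrased, even though the paper ultimately instantiates $C$ as a (linear) Reed--Solomon code. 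The one piece of bookkeeping you implicitly rely on, $|C| = q^r$, is indeed part of the paper's definition of an $[n,r,\Delta]_q$ code, so your proof is complete as written.
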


In this paper, we choose $C$ as a Reed-Solomon code~\cite{reed1960polynomial}. Set $N = q^r$. Let $\mathcal{C} = (c_{ij})$ be the $n \times N$ matrix constructed by putting all codewords of $C$ as its columns. Because of the construction of Reed-Solomon code, $\mathcal{C}$ is nonrandomly constructed.

We next introduce concatenation technique. . LLet $\mathcal{I}$ be an $q \times q$ identity matrix. The concatenation code between $\mathcal{C}$ and $\mathcal{I}$ in which the $h \times N$ resulting matrix $\mathcal{G}$, denoted as $\mathcal{G} = \mathcal{C} \circ \mathcal{I}$, is defined as:

\begin{equation}
\mathcal{G} = \mathcal{C} \circ \mathcal{I} = \begin{bmatrix}
\mathcal{C}_1 \circ \mathcal{I} & \ldots \mathcal{C}_N \circ \mathcal{I}
\end{bmatrix} =
\begin{bmatrix}
\mathcal{I}_{c_{11} + 1} & \mathcal{I}_{c_{12} + 1} & \ldots & \mathcal{I}_{c_{1N} + 1} \\
\mathcal{I}_{c_{21} + 1} & \mathcal{I}_{c_{22} + 1} & \ldots & \mathcal{I}_{c_{2N} + 1} \\
\vdots & \vdots & \ddots & \vdots \\
\mathcal{I}_{c_{n1} + 1} & \mathcal{I}_{c_{n2} + 1} & \ldots & \mathcal{I}_{c_{nN} + 1}
\end{bmatrix}
\end{equation}
\noindent
where $h = nq$ and $\mathcal{C}_j$ is the $j$th column of $\mathcal{C}$ for $j = 1, \ldots, N$. In short, the concatenation technique maps an element $\beta$ in $\mathbb{F}_q$ to a column $\mathcal{I}_{\beta + 1}$ of matrix $\mathcal{I}$. From this construction, the Hamming weight of each column in $\mathcal{G}$ is $n$. In term of matrices, an example of concatenated code is presented below in which $q = 8, n = 7, r = 3$. Then $N = q^r = 512$. Since $\mathcal{C}$ is a $7 \times 512$ matrix, which is rather large, we only present a concatenation of a column of $\mathcal{C}$ and $\mathcal{I}$ as follows:
\begin{eqnarray}
\mathcal{C}_2 &=& \begin{bmatrix} 1 & 0 & 0 & 6 & 1 & 6 & 7 \end{bmatrix}^T; \\
\mathcal{C}_2 \circ \mathcal{I} &=& \begin{bmatrix} \mathcal{I}_{2, *} & \mathcal{I}_{1, *} & \mathcal{I}_{1, *} & \mathcal{I}_{7, *} & \mathcal{I}_{2, *} & \mathcal{I}_{7, *} & \mathcal{I}_{8, *} \end{bmatrix}^T;
 \label{eqn:TensorMat}
\end{eqnarray}

It is known that for any $d = \lfloor \frac{n - 1}{r - 1} \rfloor$, $\mathcal{G}$ is a $h \times N$ $d$-disjunct matrix, where $h < d^2 \log_2^2{N}$~\cite{kautz1964nonrandom}. Moreover, the matrix $\mathcal{G}$ is nonrandomly constructed because $\mathcal{C}$ and $\mathcal{I}$ are nonrandomly constructed.

Our task now is to prove that the Hamming weight of each row in $\mathcal{G}$ is equal to $q^{r-1}$. Indeed, from Lemma~\ref{lem:countSimilarity}, any element $\alpha$ of $\mathbb{F}_q$ appears exactly $q^{r-1}$ times in any row of $\mathcal{C}$. Because $\mathcal{C}$ is concatenated with an $q \times q$ identity matrix $\mathcal{I}$, the Hamming weight of every row in $\mathcal{G}$ would be equal to $q^{r-1} < \frac{q^r}{2} = N/2$.

\end{document}